\definecolor{mygreen}{RGB}{20,120,60}
\title{Fully Dynamic Matching: \\ Beating 2-Approximation in $\Delta^\epsilon$ Update Time\footnote{A preliminary version of this paper is to appear in proceedings of SODA 2020.}}
\author{
Soheil Behnezhad\footnote{Supported in part by a Google PhD Fellowship. Research conducted in part while the author was a research intern at Google.}\\University of Maryland \and 
Jakub Łącki\\ Google Research \and
Vahab Mirrokni\\ Google Research
}
\date{}
\newcommand{\E}[0]{\ensuremath{\mathbb{E}}}
\newcommand{\opt}[0]{\ensuremath{\textsc{opt}}}
\newcommand{\lfmm}[1]{\ensuremath{\mathsf{Greedy}(#1)}}
\DeclareMathOperator{\polylog}{polylog}
\DeclareMathOperator{\elim}{elim}
\let\originalleft\left
\let\originalright\right
\renewcommand{\left}{\mathopen{}\mathclose\bgroup\originalleft}
\renewcommand{\right}{\aftergroup\egroup\originalright}
\newtheorem{theorem}{Theorem}
\newtheorem{lemma}{Lemma}[section]
\newtheorem{claim}[lemma]{Claim}
\newtheorem{observation}[lemma]{Observation}
\renewcommand{\epsilon}{\ensuremath{\varepsilon}}
\def\thm@space@setup{%
  \thm@preskip= 0.2cm
  \thm@postskip=\thm@preskip 
}
\definecolor{mygreen}{RGB}{20,125,20}
\definecolor{myred}{RGB}{125,20,20}
\definecolor{linkcolor}{RGB}{0,0,230}
\definecolor{quoteboxborder}{RGB}{200,200,200}
\definecolor{mylightgray}{RGB}{230,230,230}
\definecolor{verylightgray}{RGB}{230,230,230}
\definecolor{commentcolor}{RGB}{120,120,120}
\newcommand{\smparagraph}[1]{
\par\addvspace{0.2cm}
\noindent \textbf{#1}
}
\newcommand{\etal}[0]{\textit{et al.}}
\newcounter{myalgctr}
\newenvironment{tbox}{
\par\addvspace{0.2cm}
\begin{tcolorbox}[width=\textwidth,
                  enhanced,
				  float=h,
                  boxsep=2pt,
                  left=1pt,
                  right=-12pt,
                  top=4pt,
                  boxrule=1pt,
                  arc=0pt,
                  colback=white,
                  colframe=black,
                  unbreakable
                  ]
}{
\end{tcolorbox}
}
\newenvironment{graytbox}{
\par\addvspace{0.1cm}
\begin{tcolorbox}[width=\textwidth,
                  enhanced,
                  frame hidden,
                  boxsep=5pt,
                  left=1pt,
                  right=1pt,
                  top=4pt,
                  bottom=4pt,
                  boxrule=1pt,
                  arc=0pt,
                  colback=mylightgray,
                  colframe=black,
                  breakable
                  ]
}{
\end{tcolorbox}
}
\newcommand{\tboxhrule}[0]{
\par
\vspace*{-0.3cm}
\noindent\rule{\linewidth-13pt}{0.2mm}\par
\vspace*{0.3cm}
}
\newenvironment{titledtbox}[1]{\begin{tbox} #1 \tboxhrule}{\end{tbox}}
\newenvironment{tboxalg2e}[1]{
\refstepcounter{myalgctr}
	\begin{titledtbox}{\textbf{Algorithm \themyalgctr.} #1}
	\vspace{-0.2cm}
}
{
	\vspace{-0.3cm}
	\end{titledtbox}
}
\newenvironment{quotebox}{
\par\addvspace{0.2cm}
\begin{tcolorbox}[width=\textwidth,
                  enhanced,
                  frame hidden,
                  interior hidden,
                  boxsep=0pt,
                  left=10pt,
                  right=0pt,
                  top=0pt,
                  bottom=0pt,
                  boxrule=1pt,
                  arc=0pt,
                  colback=white,
                  colframe=black,
                  borderline west={3pt}{0pt}{quoteboxborder}
                  ]
}{
\end{tcolorbox}
\smallskip
}
\newenvironment{highlighttechnical}[0]{
\vspace{0.1cm}
\begin{tcolorbox}[width=\textwidth,
                  enhanced,
                  boxsep=2pt,
                  left=1pt,
                  right=1pt,
                  top=4pt,
                  boxrule=0.8pt,
                  arc=0pt,
                  colback=verylightgray,
                  colframe=black,
                  unbreakable
                  ]
}{
\end{tcolorbox}
}
\begin{document}
\maketitle

\thispagestyle{empty}
\begin{abstract}
\setlength{\parskip}{0.4em}
In fully dynamic graphs, we know how to maintain a 2-approximation of maximum matching extremely fast, that is, in polylogarithmic update time or better. In a sharp contrast and despite extensive studies, all known algorithms that maintain a $2-\Omega(1)$ approximate matching are much slower. Understanding this gap and, in particular, determining the best possible update time for algorithms providing a better-than-2 approximate matching is a major open question.

In this paper, we show that for any constant $\epsilon > 0$, there is a randomized algorithm that with high probability maintains a $2-\Omega(1)$ approximate maximum matching of a fully-dynamic general graph in worst-case update time $O(\Delta^{\epsilon}+\polylog n)$, where $\Delta$ is the maximum degree.

Previously, the fastest fully dynamic matching algorithm providing a better-than-2 approximation had $O(m^{1/4})$ update-time [Bernstein and Stein, SODA 2016]. A faster algorithm with update-time $O(n^\epsilon)$ was known, but worked only for maintaining the size (and not the edges) of the matching in bipartite graphs [Bhattacharya, Henzinger, and Nanongkai, STOC 2016].
\end{abstract}
\clearpage

\setcounter{page}{1}
\section{Introduction}

The problem of maintaining a large matching in the dynamic setting has received significant attention over the last two decades (see  \cite{DBLP:conf/stoc/OnakR10,DBLP:journals/siamcomp/BaswanaGS18,DBLP:conf/stoc/NeimanS13,DBLP:conf/focs/GuptaP13,DBLP:journals/siamcomp/BhattacharyaHI18,DBLP:conf/stoc/BhattacharyaHN16,DBLP:conf/focs/Solomon16,DBLP:conf/soda/BhattacharyaHN17, DBLP:conf/ipco/BhattacharyaCH17,DBLP:conf/icalp/CharikarS18,DBLP:conf/icalp/ArarCCSW18,DBLP:conf/soda/BernsteinFH19} and the references therein). After a long line of work, we now know how to maintain a {\em maximal matching} in fully dynamic graphs (i.e. graphs that undergo both edge insertions and deletions) extremely fast, that is in polylogarithmic update-time or better \cite{DBLP:conf/focs/BaswanaGS11,DBLP:conf/focs/Solomon16,DBLP:conf/soda/BernsteinFH19}. This immediately gives a 2-approximation of \emph{maximum} matching. In a sharp contrast, however, we have little understanding of the update-time-complexity once we go below $2$ approximation. A famous open question of the area, asked first\footnote{The problem was also stated in multiple subsequent papers e.g. in \cite[Section~4]{BhattacharyaArxiv}, \cite[Section~7]{DBLP:conf/soda/BernsteinS16} or \cite[Section~1]{DBLP:conf/icalp/CharikarS18}.} in the pioneering paper of Onak and Rubinfeld \cite{DBLP:conf/stoc/OnakR10} from 2010 is:

\begin{quotebox}
	``{\itshape Can the approximation constant be made smaller than 2 for maximum matching {\normalfont [}while having polylogarithmic update-time}]?'' \cite{DBLP:conf/stoc/OnakR10}
\end{quotebox}

A decade later, we are still far from achieving a polylogarithmic update-time algorithm. The fastest current result for maintaining a matching with a better-than-2 approximation factor was presented by Bernstein and Stein~\cite{DBLP:conf/soda/BernsteinS16}.\footnote{The algorithm of Bernstein and Stein remarkably achieves an (almost) $1.5$ approximation.} Their algorithm handles updates in $O(m^{1/4})$ time where $m$ denotes the number of edges in the graph. However, a notable follow-up result of Bhattacharya, Henzinger, and Nanongkai~\cite{DBLP:conf/stoc/BhattacharyaHN16} hinted that we may be able to achieve a faster algorithm. They showed that in $n$-vertex bipartite graphs, for any constant $\epsilon > 0$, there is a deterministic algorithm with amortized update-time $O(n^\epsilon)$ that maintains a $2-\Omega(1)$ approximation of the size of the maximum matching (the algorithm maintains a fractional matching, but not an integral one).

In light of the result of Bhattacharya~\etal{}~\cite{DBLP:conf/stoc/BhattacharyaHN16}, two main questions remained open: First, {\em is it possible to maintain the matching in addition to its size?} Second, {\em can the result be extended from bipartite graphs to general graphs?} We resolve both questions in the affirmative:

\begin{graytbox}
	\begin{theorem}\label{thm:main}
		For any constant $\epsilon \in (0, 1)$, there is a randomized fully-dynamic algorithm that with high probability maintains a $2-\Omega_\epsilon(1)$ approximate maximum matching in worst-case update-time $O(\Delta^{\epsilon}) + \polylog n$ under the standard oblivious adversary assumption. Here, $\Delta$ denotes the maximum degree in the graph. Also the precise approximation factor constant depends on $\epsilon$.
	\end{theorem}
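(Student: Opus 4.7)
The plan is to follow the edge-degree-constrained subgraph (EDCS) paradigm pioneered by Bernstein and Stein, combined with a sampling-based maintenance strategy that pays only $\poly(1/\epsilon)$ work per $H$-change and only $O(\Delta^\epsilon)$ sampling work per $G$-change. Recall that an EDCS $H \subseteq G$ with parameters $(\beta, (1-\epsilon)\beta)$ is a subgraph where every edge $(u,v) \in H$ has $\deg_H(u)+\deg_H(v) \leq \beta$ while every edge $(u,v) \in G \setminus H$ has $\deg_H(u)+\deg_H(v) \geq (1-\epsilon)\beta$. For $\beta = \Theta(1/\epsilon^2)$, the maximum matching of $H$ is a $(3/2 + O(\epsilon))$-approximation to that of $G$, and $H$ has maximum degree $O(\beta)$. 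Because $H$ is a bounded-degree graph on $n$ vertices, a maximum matching in $H$ can be maintained in $\polylog n$ worst-case time per $H$-update using known dynamic matching algorithms, which accounts for the additive $\polylog n$ term in the theorem.

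The main task is therefore to maintain an EDCS $H$ of $G$ in worst-case time $O(\Delta^\epsilon)$ per edge insertion or deletion in $G$. The obstacle is that a single change in $G$ may violate the EDCS conditions for many edges, and naive repair requires scanning up to $\Delta$ neighbors to detect which edges now need to be inserted into or removed from $H$. I would overcome this by defining the EDCS implicitly via a random priority function $\pi: E(G) \to [0,1]$ (uniform and independent), with the convention that, roughly, edges are swapped in and out in $\pi$-order until no violating edges remain. Under the oblivious adversary, each fixed $G$-edge appears in $H$ with some distribution that is insensitive to local perturbations, so a single change in $G$ changes $H$ only on a small set in expectation. To detect these changes without scanning the entire neighborhood of an affected vertex, I would maintain for each vertex a data structure supporting sampling of a uniformly random incident edge in $\polylog n$ time, and draw $O(\Delta^\epsilon \cdot \poly(1/\epsilon))$ samples per update to identify, with high probability, all edges whose EDCS membership changes.

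The hard part is bounding the cascade: swapping an edge in or out of $H$ alters the degrees of its endpoints, which can force further edges to be added or removed, and so on. To carry the $O(\Delta^\epsilon)$ worst-case bound through all cascaded changes I would prove, using the random priority $\pi$ together with the oblivious adversary, a high-probability (not merely expected) bound on the length of the repair chain triggered by any single $G$-update, say of the form $O(\poly(\beta) \cdot \log n)$. This is the step that appears most delicate, because existing EDCS analyses in the dynamic setting only yield amortized guarantees. I would try to obtain worst-case guarantees by combining a concentration argument on the cascade length with the standard two-copies rebuilding trick: maintain a slightly stale EDCS $H_{\text{old}}$ while gradually building $H_{\text{new}}$ in the background, swapping at a carefully chosen rate so that the per-step work remains $O(\Delta^\epsilon)$.

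Finally, the extensions claimed in the theorem follow essentially for free once the EDCS is maintained: the EDCS is defined for arbitrary (not just bipartite) graphs and corresponds to an integral subgraph, so we automatically maintain both the actual matching edges and a $2 - \Omega_\epsilon(1)$ approximation for general graphs, improving on the bipartite-and-fractional-size-only guarantee of \cite{DBLP:conf/stoc/BhattacharyaHN16}. The approximation factor is determined by the $(3/2 + O(\epsilon))$ guarantee of the EDCS combined with the approximation quality of the chosen dynamic matching algorithm on the bounded-degree graph $H$; choosing these constants in terms of $\epsilon$ gives the claimed dependence $\Omega_\epsilon(1)$.
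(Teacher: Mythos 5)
Your proposal takes a genuinely different route from the paper, but it has gaps that are substantive enough that it does not constitute a proof of the theorem.

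The paper does not use EDCS at all. It maintains a random greedy maximal matching (RGMM) $M_0$ of $G$, partitions the edges of $M_0$ into $1/\epsilon$ buckets $S_1, \ldots, S_{1/\epsilon}$ according to their rank in the greedy construction, and augments $M_0$ by finding length-3 augmenting paths via additional RGMMs on carefully constructed bipartite subgraphs $G'^A_i, G'^B_i$ (following the streaming algorithm of Konrad \etal{}). The $\Delta^\epsilon$ in the update time comes from the observation that a $G$-update changes $M_0$ with probability at most the largest rank in $M_0$, and in that event the degrees in the relevant second-stage graph are bounded by roughly the inverse of the smallest relevant rank (via the sparsification property of RGMM, Lemma~\ref{lem:sparsification}); the rank-based bucketing keeps the ratio of these two quantities at $\Delta^\epsilon$. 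Crucially, the paper sidesteps cascades entirely by relying on the fact that a single $G$-update changes $M_0$ in only $O(\log n)$ places w.h.p. (Lemma~\ref{lem:greedyMM}), and by deliberately defining $U_i$ to ignore edges of $M_0$ in higher buckets so that graph $G_i$ only depends on $S_1, \ldots, S_i$ (Claim~\ref{cl:gkremainsunchanged}).

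Your EDCS-based plan, by contrast, hinges on a claim you yourself flag as the ``most delicate'' step, and it is precisely there that the proposal breaks down. First, there is no canonical EDCS for a given graph, and the sketch of defining one ``implicitly via a random priority function $\pi$, swapping edges in $\pi$-order until no violating edges remain'' is not well-defined: the fixed point of such a local repair process is not unique, it is not obviously monotone or local in the $G$-updates, and no argument is given that a single edge update perturbs it in only polylogarithmically many places. Second, the sampling-based detection does not address the real difficulty: when an edge is swapped in or out of $H$, the degree of its endpoints changes, which may force further swaps on edges whose other endpoint is arbitrarily far from the original update; sampling $O(\Delta^\epsilon)$ edges incident to the \emph{originally affected} vertices cannot discover these cascaded violations. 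Third, the claim that the cascade length is $O(\poly(\beta)\log n)$ with high probability is exactly the open obstacle in the dynamic EDCS literature at the time of this paper; the existing Bernstein--Stein analysis only gives $O(m^{1/4})$ amortized, and no argument is offered here for why a random priority plus oblivious adversary would yield a per-update worst-case bound. The two-copies rebuilding trick also does not directly apply, since rebuilding an EDCS from scratch takes $\Omega(m)$ time and there is no small ``budget'' to amortize against in the worst-case setting without first establishing a bound on the size of the repair. In short, what you describe is a plausible research \emph{plan} for an alternative and in fact stronger result (an EDCS would give a $3/2+\epsilon$ approximation, far better than the $2-\Omega_\epsilon(1)$ of the paper), but the core maintenance lemma is asserted rather than proved, and the specific mechanisms proposed to prove it (priority-defined EDCS, local sampling) do not close the gap.
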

\end{graytbox}

Compared to the algorithm of Bhattacharya~\etal{}~\cite{DBLP:conf/stoc/BhattacharyaHN16}, our algorithm, at the expense of using randomization, maintains the matching itself,  handles general graphs, and also improves the update-time from $O(n^\epsilon)$ {\em amortized} to $\widetilde{O}(\Delta^\epsilon)$ {\em worst-case}. In addition, our algorithm is arguably simpler.

Similar to other randomized algorithms of the literature, we require the standard oblivious adversary assumption. The adversary here is all powerful and knows the algorithm, but his/her updates should be independent of the random bits used by the algorithm. Equivalently, one can assume that the sequence of updates is fixed adversarially {\em before} the algorithm starts to operate.

\section{Our Techniques}\label{sec:overview}

In this section, we provide an informal overview of the ideas used in our algorithm for Theorem~\ref{thm:main} and the challenges that arise along the way.

A main intuition behind the efficient (randomized) 2-approximate algorithms of the literature is essentially ``hiding'' the matching from the adversary through the use of randomization. For instance if we pick the edges in the matching randomly from the dense regions of the graph where we have a lot of choices, it would then take the adversary (who recall is unaware of our random bits) a lot of trials to remove a matching edge. A natural algorithm having such behavior is {\em random greedy maximal matching} (RGMM) which processes the edges in a random order and greedily adds them to the matching if possible. Indeed it was  recently shown by Behnezhad~\etal{}~\cite{misfocs} that it takes only polylogarithmic time per edge update to maintain a RGMM.

Unfortunately, exactly the feature of RGMM (or of previous algorithms  based on the same intuition) that it matches the dense regions first prevents it from obtaining a better-than-2 approximation. A simple bad example is a perfect matching whose one side induces a clique (formally, a graph on vertices $v_1, \ldots, v_{2n}$, whose edge set consists of a clique induced on $v_1, \ldots, v_n$ and edges $v_iv_{i+n}$ for $1 \leq i \leq n$). The RGMM algorithm, for instance, would pick almost all of its edges from the clique, and thus matches roughly half of the vertices while the graph has a perfect matching.


To break this 2 approximation barrier, our starting point is a slightly paraphrased variant of a streaming  algorithm of Konrad~\etal{}~\cite{streamingapprox}. The algorithm starts by constructing a RGMM $M_0$ of the input graph $G$. Unless $M_0$ is significantly larger than half of the size of the maximum matching $\opt$ of $G$, then nearly all edges of $M_0$ can be shown to belong to length-3 augmenting paths in $M_0 \oplus \opt$. Therefore to break 2 approximation, it suffices to pick a constant fraction of the edges in $M_0$, and discover a collection of vertex disjoint length-3 paths augmenting them. Konrad~\etal{}~\cite{streamingapprox} showed that this can be done by finding another RGMM, this time on a subgraph $G'$ of $G$ whose edges have one endpoint that is matched in $M_0$ and one endpoint that is unmatched (though for these edges to augment $M_0$ well, it is crucial that not all such edges are included in $G'$).

The algorithm outlined above shows how to obtain a $2-\Omega(1)$ approximate maximum matching by merely running two instances of RGMM. Given that we know how to maintain a RGMM in polylogarithmic update time due to \cite{misfocs}, one may wonder whether we can also get a similar update-time for this algorithm. Unfortunately, the answer is negative! The reason is that the second stage graph $G'$ is {\em adaptively} determined based on matching $M_0$. Particularly, a single edge update that changes matching $M_0$ may lead to deletion/insertion of a {\em vertex} (along with its edges) in the second stage graph $G'$. While we can handle edge updates in polylogarithmic time, the update-time for vertex updates is still polynomial (in the degree of the vertex being updated). Therefore, the algorithm, as stated, requires an update-time of up to $\widetilde{O}(\Delta)$.

\newcommand{\ub}[0]{\ensuremath{\alpha}}
\newcommand{\lb}[0]{\ensuremath{\beta}}

To get around the issue above, our first insight is a parametrized analysis of the update-time depending on the structure of the edges in $M_0$. Suppose that matching $M_0$ is constructed by drawing a random {\em rank} $\pi_0(e) \in [0, 1]$ independently on each edge $e$ and then iterating over the edges in the increasing order of their ranks. We show that the whole update-time (i.e. that of both the first and the second stage matchings) can be bounded by $\widetilde{O}(\rho)$, where 
\[
	\ub = \max_{e \in M_0}\pi_0(e), \qquad 
	\lb = \min_{e \in M_0}\pi_0(e), \qquad \text{ and } \qquad
	\rho = \frac{\ub}{\lb}.
\]
The reason is as follows. For an edge update $e$, the probability that it causes an update to matching $M_0$, is upper bounded by $\alpha$. (If rank of $e$ is larger than the highest rank ever in $M_0$, then $e \not\in M_0$ and thus its insertion/deletion causes no update to $M_0$.) On the other hand, in case of an update to $M_0$, the cost of a vertex update to the second stage graph can be bounded by its degree, which we show can be bounded by $\widetilde{O}(1/\beta)$ using a sparsification property of RGMM (Lemma~\ref{lem:sparsification}) applied to the first stage matching $M_0$. Thus, the overall update time is indeed $\widetilde{O}(\alpha \cdot \beta^{-1}) = \widetilde{O}(\rho)$.

The analysis highlighted above, shows that as the rank of edges in the first stage matching $M_0$ get closer to each other, our update-time gets improved. In general, this ratio can be as large as $O(\Delta)$. A natural idea, however, is to partition $M_0$ into subsets $S_1, \ldots, S_{1/\epsilon}$ such that the edges in each subset, more or less, have the same ranks (i.e. a max over min rank ratio of roughly $\Delta^\epsilon$). We can then individually construct a second-stage graph $G_i$ for each $S_i$, find a RGMM $M_i$ of it and use it to augment $S_i$ (and thus $M_0$). Since there are only $1/\epsilon$ groups, there will be one that includes at least $\epsilon = \Omega(1)$ fraction of edges of $M_0$. Therefore, augmenting a constant fraction of edges in this set alone would be enough to break 2 approximation.

However, another technical complication arises here. Once we choose to augment only a subset $S_i$ of the edges in $M_0$, with say $\frac{\max_{e \in S_i}\pi_0(e)}{\min_{e \in S_i}\pi_0(e)} \leq \Delta^\epsilon$, we cannot bound the update-time by  $\widetilde{O}(\Delta^\epsilon)$ anymore. (The argument described before only works if we consider all the edges in $M_0$.) The reason is that, normally, in the second stage graph $G_i$ we would like to have edges that have one endpoint in $S_i$ and one endpoint that is unmatched in $M_0$ so that if both endpoints of an edge $e \in S_i$ are matched in the  matching of $G_i$, then we get a length-3 augmenting path of $M_0$. This makes this second stage graph $G_i$ very sensitive to the precise set of vertices matched/unmatched in the whole matching $M_0$ (as opposed to only those matched in $S_i$) and this would prevent us from using the same argument to bound the update-time by $\widetilde{O}(\Delta^\epsilon)$.

To resolve this issue, on a high level, we also consider any vertex that is matched in $M_0$, but its matching edge has rank higher than those in $S_i$, as ``unmatched'' while constructing graph $G_i$ (see Algorithm~\ref{alg:meta}). This will allow us to argue that the update-time is $\widetilde{O}(\Delta^\epsilon)$. The downside is that not all found length 3 paths will be actual augmenting paths of $M_0$. Fortunately, though, we are still able to argue that the algorithm finds sufficiently many {\em actual} augmenting paths for $M_0$ and thus achieves a $2-\Omega(1)$ approximation (see Section~\ref{sec:approx}).

\section{Preliminaries}\label{sec:preliminaries}

For a graph $G$, we use $\mu(G)$ to denote the size of its maximum matching. Moreover, as standard, for two matchings $M_1$ and $M_2$ of the same graph, we use $M_1 \oplus M_2$ to denote their symmetric difference, i.e., the graph including edges that appear in exactly one of $M_1$ and $M_2$. For two disjoint subsets $A \subseteq V$ and $B \subseteq V$ of the vertex set $V$ of a graph $G$, we use $G[A \times B]$ to denote the bipartite subgraph of $G$ whose vertex set is $A \cup B$ and includes an edge $e$ of $G$ if and only if $e$ has one endpoint in $A$ and one in $B$. Also, generally for a subset $E'$ of the edge-set of $G$, we use $V(E')$ to denote the set of vertices with at least one incident edge in $E'$.

Given a {\em ranking} $\pi$ that maps each edge of $G$ to a real in $[0, 1]$, the greedy maximal matching \lfmm{G, \pi} is obtained as follows: We iterate over the edges in the increasing order of their ranks; upon visiting an edge $e$, if no edge incident to $e$ is already in the matching, $e$ joins the matching. Having this matching, for each edge $e$ we define the eliminator of $e$, denoted by $\elim_{G, \pi}(e)$, to be the edge incident to $e$ that is in matching \lfmm{G, \pi} and has the lowest rank; if $e$ itself is in the matching, then $\elim_{G, \pi}(e) = e$.

The greedy maximal matching algorithm will be particularly useful  if the ranking is random. We do this by picking each entry of the ranking (i.e., the rank of each edge) independently and uniformly at random from $[0, 1]$. Furthermore, it is not hard to see that these ranks need not be too long. Namely, the first $O(\log n)$ bits of the ranks are enough to ensure that no two edges receive the same rank with high probability. From now on, whenever we use the term ``random ranking'' we assume $O(\log n)$ bit ranks from $[0, 1]$ are drawn independently and uniformly at random.

We will make use of the following, by now standard, sparsification property of the random greedy maximal matching algorithm---see e.g. \cite{DBLP:conf/spaa/BlellochFS12,DBLP:conf/icml/AhnCGMW15,DBLP:conf/podc/GhaffariGKMR18,maximalmatchingfocs,assadisoda,misfocs,DBLP:conf/mfcs/Konrad18}.

\begin{lemma}\label{lem:sparsification}
	Fix an arbitrary graph $G$ and a random-ranking $\pi$ on its edge-set. The following event holds w.h.p. over the randomization in $\pi$: For every $O(\log n)$ bit rank $p \in [0, 1]$, the subgraph of $G$ including edges $e$ with $\pi(\elim_{G, \pi}(e)) > p$, has maximum degree $O(p^{-1}\log n)$.
\end{lemma}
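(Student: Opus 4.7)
The plan is first to show, for any single vertex $v$ and single threshold $p \in [0,1]$, that the number of edges $e=(v,u)$ with $\pi(\elim_{G,\pi}(e))>p$ is at most $O(p^{-1}\log n)$ with probability $1-n^{-\Omega(1)}$; since there are only $n$ vertices and $\poly(n)$ possible $O(\log n)$-bit thresholds, a union bound then yields the lemma.

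The first observation is that $\pi(\elim_{G,\pi}(e))>p$ for $e=(v,u)$ holds exactly when no matching edge of rank at most $p$ is incident to $\{v,u\}$. Let $M_p$ denote the greedy matching restricted to edges of rank at most $p$ (equivalently, the greedy maximal matching of the subgraph $G_p$ consisting of edges with rank $\leq p$), and let $U_p=V\setminus V(M_p)$. Then the ``bad'' degree of $v$ in the subgraph of the lemma equals $|N(v)\cap U_p|$ when $v\in U_p$ and is zero otherwise, so it suffices to bound $|N(v)\cap U_p|$.

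I would then couple the full greedy process with the process obtained by deleting $v$ from $G$. Let $t^{-v}(u)$ denote the time at which $u$ is matched by greedy on $G-v$ (or $\infty$ if it stays unmatched), and set $S_p=\{u\in N(v):t^{-v}(u)>p\}$. A short induction on the rank order shows that, as long as no matching edge incident to $v$ has yet been added by greedy on $G$, the full and $G-v$ processes agree on the matching of $V\setminus\{v\}$. Writing $\tau$ for the first time a matching edge incident to $v$ appears in the full process, this implies that the bad degree of $v$ equals $|S_p|$ when $\tau>p$ and is zero otherwise. Moreover, $\tau>p$ is equivalent to the inequality $\pi(v,u)>\min(p,t^{-v}(u))$ holding simultaneously for every $u\in N(v)$.

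The proof then concludes by deferred decisions. Conditioning on all ranks of edges not incident to $v$ fixes the values $t^{-v}(u)$ and hence the set $S_p$, while the ranks $\pi(v,u)$ for $u\in N(v)$ remain independent and uniform on $[0,1]$. The characterization above yields
\[
\Pr\bigl[\tau>p \bigm| \{t^{-v}(u)\}_{u\in N(v)}\bigr] \;=\; \prod_{u\in N(v)}\bigl(1-\min(p,t^{-v}(u))\bigr) \;\leq\; (1-p)^{|S_p|},
\]
and hence $\Pr[\text{bad deg}(v)\geq K] = \Pr[\tau>p,\;|S_p|\geq K] \leq (1-p)^K \leq e^{-pK}$, which becomes at most $n^{-c}$ as soon as $K=\Theta(p^{-1}\log n)$ with a sufficiently large implicit constant. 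The main thing to get right is the coupling with the $G-v$ process and the accompanying agreement lemma, which pins down exactly when the full process sees the same neighborhood of $v$ as the deleted-vertex one; once that coupling is in hand, the rest is a standard application of the principle of deferred decisions.
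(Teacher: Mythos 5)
The paper does not prove Lemma~\ref{lem:sparsification}; it states it as a by-now-standard sparsification property of random greedy maximal matching and simply cites prior work for it. Your proof is correct and complete: the reduction of the bad degree of $v$ to $|N(v)\cap U_p|$ (zero if $v\notin U_p$), the induction showing the greedy processes on $G$ and on $G-v$ agree on $V\setminus\{v\}$ until $v$ is first matched, the characterization of $\tau>p$ as $\pi(v,u)>\min(p,t^{-v}(u))$ for all $u\in N(v)$, and the deferred-decisions bound $\Pr[\tau>p\mid\{t^{-v}(u)\}_u]\le(1-p)^{|S_p|}$ are all sound; since $|S_p|$ is determined by the conditioning on edges not incident to $v$, this gives $\Pr[\,|N(v)\cap U_p|\ge K\,]\le(1-p)^K\le e^{-pK}$, and the union bound over $n$ vertices and $\poly(n)$ admissible thresholds $p$ finishes it. This matches the standard argument in the references the paper cites for this lemma.
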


\section{A Static Algorithm}

In this section we describe a static algorithm for finding an approximate maximum matching. We show in Section~\ref{sec:approx} that the algorithm provides a $2-\Omega(1)$ approximation and show in Section~\ref{sec:dynamicimplementation} that it can be maintained in update-time $\widetilde{O}(\Delta^\epsilon)$.

\smparagraph{Intuitive explanation of the algorithm.} We start with a RGMM $M_0$. After that, we partition the edge set of $M_0$ into $1/\epsilon$ partitions $S_1, \ldots, S_{1/\epsilon}$ such that roughly the maximum rank over the minimum rank in each partition is at most $\Delta^\epsilon$. Then, focusing on each partition $S_i$, we try augmenting the edges of $S_i$ by finding two random greedy matchings of subgraphs $G'^A_i$ and $G'^B_i$ that are determined based on set $S_i$. Roughly, each edge in $G'^A_i$ (and similarly in $G'^B_i$) has one endpoint that is matched in $S_i$---this is the edge to be augmented---and one endpoint that is either unmatched or matched after the edges in $S_i$ are processed in the greedy construction of $M_0$. Therefore if for an edge $uv \in S_i$, its endpoint $v$ is matched via an edge $vx$ in the matching of $G'^A_i$ and $u$ is matched via an edge $uy$ in the matching of $G'^B_i$, and in addition $x$ and $y$ are unmatched in $M_0$ then $uy, uv, vx$ will be a length 3 augmenting path of $M_0$.



\begin{tboxalg2e}{Meta algorithm for finding a better than 2 approximation of maximum matching.}
\label{alg:meta}
\begin{algorithm}[H]
	\DontPrintSemicolon
	\SetAlgoSkip{bigskip}
	\SetAlgoInsideSkip{}
	
	\Input{Graph $G=(V, E)$ and a parameter $\epsilon \in (0, 1)$.}
	
	$M_0 \gets \lfmm{G, \pi_0}$ where $\pi_0$ is a random permutation of $E$.\label{line:m0}\;
	
	\For{any $i \in \{1, 2, \ldots, 1/\epsilon \}$}{
		\If{$i < 1/\epsilon$}{
			 $S_i \gets \left \{e \mid e \in M_0 \text{ and } \pi_0(e) \in \big(\Delta^{-i\epsilon}, \Delta^{-(i-1)\epsilon}\big]   \right\}$. \tcp*{We may call $S_i$ ``partition $i$''.}
		}\Else{
			$S_{1/\epsilon} \gets \left \{e \mid e \in M_0 \text{ and } \pi_0(e) \in \big[0, \Delta^{-1+\epsilon}\big]   \right\}$.
		}
	$U_i \gets \{u \mid \text{for any edge $uv$ either $uv \not\in M_0$ or $uv \in \cup_{j=1}^{i-1} S_{j}$} \}$.\;
	\tcp{$U_i$ includes nodes that are unmatched in $M_0$ or matched by edges in $S_1, \ldots, S_{i-1}$.}
	
	For each edge $uv \in S_i$ put its endpoint with lower ID in set $V^A_i$ and the other in $V^B_i$.\; 
	\tcp{The use of IDs is just to simplify the statements. Any  arbitrary way of putting one endpoint of the edge in $V^A_i$ and the other in $V^B_i$ would work.}
	
	Partition $U_i$ into $U^A_i$ and $U^B_i$ by each node picking its partition independently and u.a.r.\label{line:partitionU}\;
	
	Sample each edge in $S_i$ independently with probability $p := 0.03$.\label{line:sampleSi}\; 
	
	$V'^A_i \gets$ vertices in $V^A_i$ whose edge in $S_i$ is sampled.\;
	
	$V'^B_i \gets$ vertices in $V^B_i$ whose edge in $S_i$ is sampled.\;
	
	$G'^A_i \gets G[V'^A_i \times U^A_i]$.\;
	$G'^B_i \gets G[V'^B_i \times U^B_i]$.\;
	
	$M_i \gets \lfmm{G'^A_i, \pi_i} \cup \lfmm{G'^B_i, \pi_i}$ where $\pi_i$ is a fresh random permutation of $E$.\label{line:mi}
	}
	
	Return the maximum matching of graph $(M_0 \cup M_1 \cup \ldots \cup M_{ 1/\epsilon})$.
\end{algorithm}
\end{tboxalg2e}

\section{Approximation Factor of Algorithm~\ref{alg:meta}}\label{sec:approx}

In this section, we prove that the approximation factor of Algorithm~\ref{alg:meta} is at most $2-\Omega(1)$ given that $\epsilon$ is a constant.

We fix one arbitrary maximum matching of graph $G$ and denote it by \opt{}; recall that $|\opt| = \mu(G)$. Having this matching \opt{}, we now call an edge $e \in M_0$ {\em 3-augmentable} if it is in a length 3 augmenting path in $\opt \oplus M_0$. Observe that since $\opt$ cannot be augmented, this augmenting path should start and end with edges in $\opt$ and thus edge $e$ has to be in the middle.

The following lemma is crucial in the analysis of the approximation factor. Basically, it says that for any partition $S_i$ where most of edges in $S_i$ are 3-augmentable (which in fact should be the case for most of the partitions if $|M_0|$ is close to $0.5\mu(G)$), roughly $p/4$ fraction of the edges in $S_i$ are in length 3 augmenting paths in $S_i \oplus M_i$. We emphasize that this does not directly prove the bound on the approx factor as these length 3 augmenting paths in $S_i \oplus M_i$ may not necessarily be augmenting paths in $M_0 \oplus M_i$.

\begin{lemma}\label{lem:augmentationinapartition}
	For any $i \in [1/\epsilon]$ and any parameter $\delta \in (0, 1)$, if $(1-\delta)$ fraction of the edges in $S_i$ are 3-augmentable, then in expectation, there are at least $(\frac{(1-\delta)p}{4}-4p^2)|S_i|$ edges in $S_i$ where both of their endpoints are matched in $M_i$.
\end{lemma}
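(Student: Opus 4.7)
The plan is to lower-bound the expected number of edges of $S_i$ both of whose endpoints are matched in $M_i$ by analyzing, for each 3-augmentable edge, the probability that the algorithm's random choices ``line up'' favorably. Let $S_i^{3a}\subseteq S_i$ denote the 3-augmentable edges, so $|S_i^{3a}|\ge(1-\delta)|S_i|$. For each $e=u_ev_e\in S_i^{3a}$, fix the unique length-3 augmenting path $x_e$--$u_e$--$v_e$--$y_e$ in $M_0\oplus\opt$ (WLOG $u_e\in V^A_i$, $v_e\in V^B_i$). Then $x_e,y_e$ are unmatched in $M_0$ and so lie in $U_i$; moreover, since $\opt$ is a matching, the sets $\{x_e\}$ and $\{y_e\}$ consist of pairwise distinct vertices across distinct 3-augmentable edges.

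Next, introduce the events $\mathcal{A}_e=\{e\text{ sampled}\}$ (with $\Pr[\mathcal{A}_e]=p$, from Line~\ref{line:sampleSi}) and $\mathcal{B}_e=\{x_e\in U^A_i,\;y_e\in U^B_i\}$ (with $\Pr[\mathcal{B}_e]=1/4$, from Line~\ref{line:partitionU}); these are mutually independent. Let $X_e$ be the indicator that $e$ is sampled and both of its endpoints are matched in $M_i$. Under $\mathcal{A}_e\cap\mathcal{B}_e$ the edges $u_ex_e$ and $v_ey_e$ lie in $G^{\prime A}_i$ and $G^{\prime B}_i$ respectively, so by the maximality of $M^A_i:=\lfmm{G^{\prime A}_i,\pi_i}$ and $M^B_i:=\lfmm{G^{\prime B}_i,\pi_i}$, at least one endpoint of each must be matched. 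Consequently, $X_e=0$ on $\mathcal{A}_e\cap\mathcal{B}_e$ forces one of two bad events: $x_e$ is matched in $M^A_i$ to some $u^*\ne u_e$, or $y_e$ is matched in $M^B_i$ to some $v^*\ne v_e$. Letting $\beta^A$ (resp.\ $\beta^B$) count 3-augmentable edges suffering the first (resp.\ second) event, linearity gives $E[\sum_{e\in S_i}X_e]\ge(1-\delta)|S_i|\cdot p/4 - E[\beta^A]-E[\beta^B]$.

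The main technical step, which I expect to be the chief obstacle, is showing $E[\beta^A]\le 2p^2|S_i|$ (the bound on $E[\beta^B]$ being symmetric). The key structural observation is that each 3-augmentable $e$ contributing to $\beta^A$ maps injectively, via $e\mapsto x_e$, to a distinct matched edge $u^*x_e\in M^A_i$ with $u^*\ne u_e$; hence $\beta^A$ is at most the number of such matched edges which additionally satisfy $\mathcal{A}_e$. Two ingredients then combine to produce the $p^2$ scaling: (i)~the global size bound $|M^A_i|\le|V^{\prime A}_i|$, giving $E[|M^A_i|]\le p|S_i|$, and (ii)~the independence of the sampling of $e$'s $S_i$-edge from the sampling of $u^*$'s $S_i$-edge (they are distinct edges of $S_i$), which contributes one further factor of $p$ to the loss. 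The delicate part is accounting for the dependence between $\mathcal{A}_e$ and the matching event $\{u^*x_e\in M^A_i\}$, since conditioning on $\mathcal{A}_e$ modifies $G^{\prime A}_i$ by introducing edges incident to $u_e$ and thus changes $M^A_i$; one must verify that this dependence contributes only a constant multiplicative slack so that $E[\beta^A]=O(p^2|S_i|)$ with the explicit constant $2$ sufficing for $p=0.03$.

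Combining the pieces gives $E[\sum_e X_e]\ge(1-\delta)|S_i|\,p/4-4p^2|S_i|=((1-\delta)p/4-4p^2)|S_i|$, as required. The two structural ingredients used throughout are the vertex-disjointness of 3-augmenting paths in $M_0\oplus\opt$ (which makes $e\mapsto x_e$ and $e\mapsto y_e$ injective over $S_i^{3a}$) and the independence of the samplings of distinct $S_i$-edges; the conceptual core is the inequality $X_e\ge\mathbf{1}[\mathcal{A}_e\cap\mathcal{B}_e]-\mathbf{1}[\text{bad}^A(e)]-\mathbf{1}[\text{bad}^B(e)]$, while the hardest technical moment is the careful bookkeeping for the bad events to extract the $p^2$ (rather than $p$) dependence in the loss.
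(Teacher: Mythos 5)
Your setup matches the paper's: fix the length-3 augmenting paths $x_e$--$u_e$--$v_e$--$y_e$, get the factor $1/4$ from the random partition of $U_i$ and the factor $p$ from edge-sampling, and then subtract a ``loss'' term that must be shown to be $O(p^2)|S_i|$. The divergence, and where I believe your argument has a real gap, is in how that loss is defined and bounded.

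The paper defines the wasted event for $e$ on the $A$-side as ``$u_e\in V'^A_i$ and $u_e$ is \emph{unmatched} in $\lfmm{G'^A_i,\pi_i}$.'' Because the map $e\mapsto u_e$ is injective, the number of such $e$ is exactly the number of vertices in $V(\opt^A)\cap V'^A_i$ that are unmatched, and this is precisely the quantity that Lemma~\ref{lem:greedyvertexsample} (proved in Appendix~\ref{sec:greedyvertexsample} by a gradual-revelation argument with Wald's equation) bounds by $2p^2|S_i|$. You instead define $\text{bad}^A(e)$ as ``$x_e$ is matched in $M^A_i$ to some $u^*\ne u_e$.'' By the maximality argument this event is \emph{implied by} the paper's event (given $u_ex_e\in G'^A_i$), but the converse fails: $u_e$ can be matched to some $x^*\ne x_e$ while $x_e$ is simultaneously matched to $u^*\ne u_e$, a situation where the paper's event does not occur but yours does and where no augmenting path is actually lost. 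So $\beta^A$ is a strictly larger random variable than the one the paper controls, and the inequality you write, $X_e\ge\mathbf{1}[\mathcal{A}_e\cap\mathcal{B}_e]-\mathbf{1}[\text{bad}^A]-\mathbf{1}[\text{bad}^B]$, while valid, is looser than it needs to be.

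The gap you flag yourself --- that conditioning on $\mathcal{A}_e$ changes $G'^A_i$ and hence $M^A_i$ --- is not a constant-slack bookkeeping issue; it is the entire content of the lemma. Adding $u_e$ to $V'^A_i$ is positively correlated with $x_e$ becoming matched (it provides an extra route), so the heuristic $\Pr[\mathcal{A}_e]\cdot\Pr[x_e\text{ matched to }u^*\ne u_e]\approx p\cdot p$ is not a derivation, and the injectivity-plus-$|M^A_i|\le|V'^A_i|$ ingredients you cite only give the trivial deterministic bound $\beta^A\le|V'^A_i|$, i.e.\ $E[\beta^A]\le p|S_i|$, which is too weak (it makes your final expression negative for $p=0.03$). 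There is no union bound over competitors $u^*$ that cleanly produces a second factor of $p$, because the event ``$x_e$ is matched'' is a property of the entire greedy run and is not a union of rare per-competitor events. The fix is to tighten the bad event to ``$u_e$ unmatched in $M^A_i$'' (using $e\mapsto u_e$, not $e\mapsto x_e$, as the injection) and then invoke Lemma~\ref{lem:greedyvertexsample}, which packages exactly the stopping-time argument needed to turn $|M^A_i|\le|V'^A_i|$ into the sharp $2p^2|S_i|$ bound on the number of sampled unmatched endpoints.
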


In order to prove this lemma, in Lemma~\ref{lem:greedyvertexsample} we recall a   property of the greedy maximal matching algorithm under vertex samplings originally due to \cite{streamingapprox,streamingarXiv}. We note that the property that we need is slightly stronger than the one proved in \cite[Theorem~3]{streamingarXiv} but follows from a similar argument. Roughly, we need a lower bound on the number of vertices in a specific vertex subset that are matched, while the previous statement only lower bounded the overall matching size. We provide the complete proof in Appendix~\ref{sec:greedyvertexsample}.
%
%

\begin{lemma}\label{lem:greedyvertexsample}
	Let $G(V, U, E)$ be a bipartite graph, $\pi$ be an arbitrary permutation over $E$, and $M$ be an arbitrary matching of $G$. Fix any parameter $p \in (0, 1)$ and let $W$ be a subsample of $V$ including each vertex independently with probability $p$. Define $X$ to be the number of edges in $M$ whose endpoint in $V$ is matched in $\lfmm{G[W \cup U], \pi}$; then $$\E_{W}[X] \geq p(|M|-2p|V|).$$
\end{lemma}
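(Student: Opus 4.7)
My plan is to reduce the lemma to showing $\E[|B|] \leq 2p^2|V|$, where $B$ denotes the set of ``bad'' $M$-edges $e = vu$ with $v \in W$ but $v$ unmatched in $M' := \lfmm{G[W \cup U], \pi}$. Writing $M_W = \{e = vu \in M : v \in W\}$, one has $X = |M_W| - |B|$ pointwise, and since $\E[|M_W|] = p|M|$ by linearity of expectation, the lemma follows from the bound on $\E[|B|]$.

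First, I would produce a witness for each bad edge. For $e = vu \in B$, since $v \in W$ but $v \notin V(M')$, the edge $vu$ was considered by greedy on $G[W \cup U]$ but not added; because $v$ is never matched, the blocking must occur at $u$, so there exists $w(e) = v'u \in M'$ with $v' \in W \setminus \{v\}$ and $\pi(v'u) < \pi(vu)$. Since $M$ is a matching, distinct bad edges have distinct $U$-endpoints, and so $w$ is an injection from $B$ into $M'$.

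Next, I would extract a factor of $p^2$ by exploiting that each bad edge requires two distinct vertices of $V$, namely $v$ and $v'$, to both lie in $W$, and the events $\{v \in W\}$ and $\{v' \in W\}$ are independent Bernoulli-$p$. I would charge each bad edge to the pair $(v, v')$ and split according to whether $v' \in V(M)$. If $v' \in V(M)$, let $v'u''$ be $v'$'s $M$-edge; this is automatically a ``good'' $M$-edge since $v' \in V(M')$, and the assignment $e \mapsto v'u''$ is injective because $v'$ determines $u$ (as $u$ is $v'$'s unique $M'$-partner) and hence $e$. If $v' \notin V(M)$, then $v'$ is a distinct element of $W \cap (V \setminus V(M))$. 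Combining the two cases and using two-factor independence of the sampling events should give $\E[|B|] \leq 2p^2 |V|$.

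The hardest part will be the final counting step. A naive union bound over all potential witnesses $v'$ (those with $v'u \in E$ and $\pi(v'u) < \pi(vu)$) only yields a bound proportional to $\deg(u)$, which for adversarial $\pi$ can be as large as $|V|$ per $u \in V(M) \cap U$ and so loses a degree factor in total. The crucial refinement is that the witness $v'$ must not merely be a neighbor of $u$ in $E$, but actually the $M'$-partner of $u$; since each $u$ has at most one $M'$-partner, the realized witnesses occupy disjoint slots in $M'$, and together with the two independent sampling events this yields the correct $2p^2|V|$ bound. This vertex-indexed sharpening of Konrad et al.'s matching-size bound is the technical heart of the argument, and is what I would defer to the appendix.
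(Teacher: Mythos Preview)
Your reduction to bounding $\E[|B|]$ and the witness construction are correct; the gap is in the last step where you ``extract a factor of $p^2$.'' You write that $\{v\in W\}$ and $\{v'\in W\}$ are independent Bernoulli-$p$, but this is only true for a \emph{fixed} pair. Your witness $v'$ is the $M'$-partner of $u$ and is therefore a function of the entire sample $W$; once you condition on the realized $v'$, the event $\{v'\in W\}$ has probability $1$, so there is no second independent coin to harvest. Your case split does not rescue this: the injections $B_1\hookrightarrow G$ and $B_2\hookrightarrow W\cap(V\setminus V(M))$ each carry only one factor of $p$, and combining them with $\E[|G|]=\E[X]=p|M|-\E[|B|]$ yields $\E[|B|]\le p|V|/2$, not $2p^2|V|$. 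To make a direct argument work you need a genuine decoupling step, e.g.\ the observation that if $v\in W$ is unmatched in $M'$ then $M'=\lfmm{G[(W\setminus\{v\})\cup U],\pi}$, so that ``$u$ is matched when $v$ is excluded'' is measurable with respect to $W\setminus\{v\}$ and hence independent of $\{v\in W\}$; this buys the second $p$ but you have not stated it.

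The paper avoids this dependence issue with a different mechanism. It runs greedy as a process that \emph{defers} revealing whether $v\in W$ until the first moment $v$ appears in a potential-match edge (both endpoints alive), and reads these reveals for $v\in V(M)\cap V$ from a fresh i.i.d.\ Bernoulli-$p$ tape $(x_1,x_2,\ldots)$. Then $X=\sum_{i=1}^{D}x_i$ where $D$ is the number of such reveals; $D$ is a stopping time for the tape, so Wald's identity gives $\E[X]=p\,\E[D]$ outright. The remaining work is the easy deterministic bound $\E[D]\ge |M|-2p|V|$: a vertex $v\in V(M)\cap V$ fails to get a potential match only if its $M$-partner $u$ is matched in $M'$ to some other sampled vertex, and there are at most $\E[|W|]=p|V|$ sampled vertices to blame. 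The Wald packaging is exactly what manufactures the product $p\cdot\E[D]$ without ever having to reason about a random witness.
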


\newcommand{\vm}{\ensuremath{V_M}}
\newcommand{\vmsampled}{\ensuremath{W_{\overline{M}}}}

\newcommand{\vnotm}{\ensuremath{V_{\overline{M}}}}
\newcommand{\vnotmsampled}{\ensuremath{W_{\overline{M}}}}

Equipped with Lemma~\ref{lem:greedyvertexsample}, we are ready to prove Lemma~\ref{lem:augmentationinapartition}.

\begin{proof}[Proof of Lemma~\ref{lem:augmentationinapartition}]
	Fix a partition $S_i$ which includes $(1-\delta)|S_i|$ 3-augmentable edges and denote by $T_i$ the subset of edges in $S_i$ that are 3-augmentable; implying that
	\begin{equation}\label{eq:74881238}
		|T_i| \geq (1-\delta)|S_i|.
	\end{equation}
	Recall that each edge $e \in T_i$ is the middle edge in a length 3 augmenting path in $\opt \oplus M_0$ where $\opt$ is a fixed maximum matching of $G$. Define set $\opt^A$ (resp. $\opt^B$) to be the subset of edges $uv$ in $\opt$ where one of their endpoints, say, $v$ is in $V(T_i) \cap V^A_i$ (resp. $V(T_i) \cap V^B_i$) and the other endpoint $u$ is in set $U^A_i$ (resp. $U^B_i$).
	
	We say an edge $ab \in T_i$ is {\em good} if $a$ is matched in $\opt^A$ and also $b$ is matched in $\opt^B$; and use $Y$ to denote the subset of edges in $T_i$ that are good. We first claim that 
	\begin{equation}\label{eq:124978}
		\E_{U^A_i, U^B_i}[|Y|] \geq \frac{1}{4} |T_i|,
	\end{equation}
	where observe that here the expectation is only taken over the randomization in partitioning $U_i$ into $U^A_i$ and $U^B_i$. To see this, fix an edge $ab \in T_i$ and let $au$ and $bv$ be the edges in $\opt$ that along with $ab$ form a length 3 augmenting path. Observe that edge $au \in \opt^A$ if $u \in U^A_i$ and $bv \in \opt^B$ if $v \in U^B_i$. Since the partition of $v$ and $u$ is chosen independently and u.a.r., there is a probability $\frac{1}{4}$ that both these events occur, implying $ab \in Y$. Linearity of expectation over every edge in $T_i$ proves (\ref{eq:124978}).

	Now, consider graphs $G^A_i := G[V^A_i \times U^A_i]$ and $G^B_i := G[V^B_i \times U^B_i]$ and observe that $\opt^A$ is a matching of $G^A_i$ and $\opt^B$ is a matching of $G^B_i$. One can confirm that $V'^A_i$ is a random subsample of $V^A_i$ where for each $v \in V^A_i$, $\Pr[v \in V'^A_i] = p$. More importantly, whether for a vertex $v$ the event $v \in V'^A_i$ holds is independent of which other vertices are in $V'^A_i$. (Though we note that $v \in V'^A_i$ is not independent of those vertices in $V'^B_i$.) Similarly, $V'^B_i$ can be regarded as a random subsample of $V^B_i$ wherein the vertices appear independently from each other. As a result, graph $G'^A_i$ (resp. $G'^B_i$) is essentially obtained by retaining a random subsample of the vertices in the $V^A_i$ (resp. $V^B_i$) partition of graph $G^A_i$ (resp. $G^B_i$). We can thus use Lemma~\ref{lem:greedyvertexsample} while fixing matching $\opt^A$ to infer that
	\begin{equation}\label{eq:21401234}
	\begin{aligned}
	\E\left[ \begin{tabular}{>{\centering}p{5.5cm}}
	\# of vertices in $V(\opt^A) \cap V'^A_i$ matched in \lfmm{G'^A_i, \pi_i}
	\end{tabular}
	\right] \geq p(|\opt^A|-2p|V^A_i|).
	\end{aligned}
	\end{equation}
	Similarly,
	\begin{equation}\label{eq:281347012}
	\begin{aligned}
	\E\left[ \begin{tabular}{>{\centering}p{5.5cm}}
	\# of vertices in $V(\opt^B) \cap V'^B_i$ matched in \lfmm{G'^B_i, \pi_i}
	\end{tabular}
	\right] \geq p(|\opt^B|-2p|V^B_i|).
	\end{aligned}
	\end{equation}
	Observe that $\E[|V(\opt^A) \cap V'^A_i|] = p|\opt^A|$ since each edge in $\opt^A$ has one endpoint in $V^A_i$ which is sampled to $V'^A_i$ with probability $p$. Combined with (\ref{eq:21401234}) this means that
	\begin{equation}\label{eq:32103470134}
	\begin{aligned}
		\E\left[ \begin{tabular}{>{\centering}p{5.3cm}}
		\# of vertices in $V(\opt^A) \cap V'^A_i$ \underline{not} matched in \lfmm{G'^A_i, \pi_i}
		\end{tabular}
		\right] \leq p|\opt^A| - p(|\opt^A|-2p|V^A_i|) \leq 2p^2|V^A_i| = 2p^2|S_i|.
	\end{aligned}
	\end{equation}
	Similarly by (\ref{eq:281347012}),
	\begin{equation}\label{eq:983014724234}
	\begin{aligned}
	\E\left[ \begin{tabular}{>{\centering}p{5.3cm}}
	\# of vertices in $V(\opt^B) \cap V'^B_i$ \underline{not} matched in \lfmm{G'^B_i, \pi_i}
	\end{tabular}
	\right] \leq p|\opt^B| - p(|\opt^B|-2p|V^B_i|) \leq 2p^2|V^B_i| = 2p^2|S_i|.
	\end{aligned}
	\end{equation}
	By (\ref{eq:124978}) we have $\frac{1}{4}|T_i|$ expected good edges. Out of these, each edge $ab \in Y$ is sampled, i.e., $a \in V'^A_i$ and $b \in V'^B_i$ with probability $p$. Therefore, in expectation, there are a total of $\frac{p}{4}|T_i|$ sampled good edges. Say a sampled good edge $ab$ is wasted if $a$ is unmatched in $\lfmm{G'^A_i, \pi_i}$ or $b$ is unmatched in $\lfmm{G'^B_i, \pi_i}$. Combined with (\ref{eq:32103470134}) and (\ref{eq:983014724234}) there are at most $2p^2|S_i|+2p^2|S_i|\leq4p^2|S_i|$ wasted edges. This means that the expected number of sampled good edges that are not wasted is at least
	$$
	\frac{p}{4}|T_i| - 4p^2|S_i|.
	$$ 
	Moreover, by (\ref{eq:74881238}), $|T_i| \geq (1-\delta)|S_i|$. Replacing this into the equation above, we get that there are, in expectation, at least $\frac{p}{4}(1-\delta)|S_i|-4p^2|S_i| = (\frac{(1-\delta)p}{4}-4p^2)|S_i|$ good edges that are not wasted, i.e., both of their endpoints are matched in $M_i = \lfmm{G'^A_i, \pi_i} \cup \lfmm{G'^B_i, \pi_i}$ as claimed in the lemma.
\end{proof}

The following claim shows that there is a subset $S_{i^\star}$ that is ``large enough'' compared to the size of matching $M_0$ and is much larger than the total number of edges in previous subsets $S_1, \ldots, S_{i^\star - 1}$.

\begin{claim}\label{cl:layerwithlargesize}
	There exists an integer $i^\star \in [1/\epsilon]$ such that 
	$$|S_{i^\star}| \geq \frac{1}{2^{13/\epsilon}} |M_0| \qquad \text{and} \qquad |S_{i^\star}| > 2^{11}\sum_{i=1}^{i^\star-1}|S_i|.$$
\end{claim}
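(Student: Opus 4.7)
The plan is to argue by contradiction. Setting $k = 1/\epsilon$, $M = |M_0|$, $\alpha = 2^{-13/\epsilon}$, and $C = 2^{11}$, I assume that no $i^\star \in [k]$ satisfies both required inequalities. Then every index $i$ falls in at least one of two buckets: it is \emph{small}, meaning $|S_i| < \alpha M$, or it is \emph{bounded}, meaning $|S_i| \leq C \sum_{j<i}|S_j|$. Writing $b_i := \sum_{j \leq i} |S_j|$, the goal is to push an upper bound on the partial sums all the way up to $b_k$ and obtain $b_k < M$, which contradicts the fact that the $S_i$'s partition $M_0$.

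The main step is to locate a pivot index $i_0$ at which the partial sum first crosses the threshold $\alpha M$, i.e., $i_0$ is the smallest index with $b_{i_0} \geq \alpha M$. I first observe that $i_0 \geq 2$: for $i=1$ the bounded condition forces $|S_1| \leq 0$, so $b_1 = |S_1| < \alpha M$ regardless of which bucket $i=1$ lies in. Using $b_{i_0 - 1} < \alpha M$ (which holds by minimality of $i_0$), a short case split on whether $S_{i_0}$ is small or bounded yields $b_{i_0} \leq (1+C)\alpha M$. For every $i > i_0$, since now $b_{i-1} \geq \alpha M$, both buckets give the growth inequality $b_i \leq (1+C)\, b_{i-1}$ (the small case gives $b_i \leq 2 b_{i-1}$, which is absorbed into $1+C$). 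Iterating yields $b_k \leq (1+C)^{k - i_0 + 1}\, \alpha M \leq (1+C)^{k-1}\, \alpha M$.

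Plugging in the constants, $(1+C)^{k-1}\, \alpha \leq 2^{12(k-1)} \cdot 2^{-13k} = 2^{-k-12} < 1$, so $b_k < M$, the desired contradiction. The one care-point is the tight constant budget: a more naive bound of the form $(1+C)^{k+1}\, \alpha M$ only closes once $k$ is a moderately large constant, so it is important to exploit the observation $i_0 \geq 2$ in order to save two factors of $(1+C)$ and make the argument work uniformly over all $k \geq 2$. Everything else is routine partial-sum bookkeeping once the pivot $i_0$ and the two cases (``small'' vs. ``bounded'') are set up correctly.
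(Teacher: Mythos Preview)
Your argument is correct but follows a genuinely different route from the paper. The paper's proof is constructive: it defines $i^\star$ as the least index with $|S_{i^\star}| \geq 2^{12i^\star - 13/\epsilon}|M_0|$, shows such an index exists (otherwise the total mass falls short of $|M_0|$), and then reads off both inequalities directly from this single geometric threshold. In contrast, you argue by contradiction on the partial sums $b_i$, splitting every index into ``small'' or ``bounded'' and propagating a multiplicative growth bound past a pivot $i_0$. Your approach is more hands-on and requires the extra care you note (forcing $i_0 \geq 2$ to keep the exponent at $k-1$), whereas the paper's choice of the threshold function $2^{12i - 13/\epsilon}$ bakes in exactly the right geometric rate so that both conclusions drop out in one line each. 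Both arguments ultimately exploit the same tension between a geometric series with ratio $2^{12}$ and the normalization $2^{-13/\epsilon}$, but the paper's is shorter and avoids the case split.

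One small omission: you should explicitly dispose of the case where no pivot $i_0$ exists, i.e., $b_i < \alpha M$ for all $i \leq k$. This is immediate (then $b_k < \alpha M < M$ already contradicts $b_k = M$), but as written your argument silently assumes $i_0$ is well-defined.
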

\begin{proof}
	Let $i^\star$ be the smallest integer in $[1/\epsilon]$ for which 
	\begin{equation}\label{eq:18374012384}
	|S_{i^\star}| \geq 2^{12i^\star-\frac{13}{\epsilon}}|M_0|,
	\end{equation}
	we show that both conditions should hold for $i^\star$. First, we have to prove that there is a choice of $i^\star \in [1/\epsilon]$ satisfying (\ref{eq:18374012384}). Suppose for the sake of contradiction that this is not the case; then:
	$$
		\sum_{i=1}^{1/\epsilon} |S_i| < \sum_{i=1}^{1/\epsilon} 2^{12i-\frac{13}{\epsilon}} |M_0| = 2^{\frac{-13}{\epsilon}}|M_0|\sum_{i=1}^{1/\epsilon} 2^{12i} \ll 2^{\frac{-13}{\epsilon}}|M_0|(2 \times 2^{12/\epsilon}) < |M_0|. 
	$$
	Observe that subsets $S_1, \ldots, S_{1/\epsilon}$ partition the edges in $M_0$ and thus it should hold that $\sum_{i=1}^{1/\epsilon} |S_i| = |M_0|$; implying that the equation above is indeed a contradiction, proving existence of $i^\star$.
	
	The first inequality of the claim is automatically satisfied for $i^\star$ due to (\ref{eq:18374012384}) since
	\begin{equation*}
		|S_{i^\star}| \geq 2^{12i^\star-\frac{13}{\epsilon}}|M_0| > 2^{-\frac{13}{\epsilon}}|M_0|.
	\end{equation*}
	It thus only remains to prove the second inequality. For that, observe that since $i^\star$ is the smallest integer satisfying (\ref{eq:18374012384}), then for any $i < i^\star$ we have $|S_i| < 2^{12i-\frac{13}{\epsilon}}|M_0|.$ This means that 
	$$
	\sum_{i=1}^{i^\star-1} |S_i| < \sum_{i=1}^{i^\star-1}2^{12i-\frac{13}{\epsilon}}|M_0| = 2^{\frac{-13}{\epsilon}}|M_0|\sum_{i=1}^{i^\star-1}2^{12i} \ll 2^{\frac{-13}{\epsilon}} |M_0| (2 \times 2^{12(i^\star - 1)}) = 2^{12i^\star-11-\frac{13}{\epsilon}}|M_0|.
	$$
	Combining this with (\ref{eq:18374012384}) we get
	$$
	\frac{|S_{i^\star}|}{\sum_{i=1}^{i^\star - 1}|S_i|} > \frac{2^{12i^\star - \frac{13}{\epsilon}}|M_0|}{2^{12i^\star-11-\frac{13}{\epsilon}}|M_0|} = 2^{11},
	$$
	implying the second inequality of the claim as well.
\end{proof}

Let us recall a folklore property that if maximal matching $M_0$ is not already large enough, then most of the edges in it are 3-augmentable.

\begin{observation}[folklore]\label{obs:length3augpaths}
	If $|M_0| < (\frac{1}{2}+\delta)\mu(G)$, then at least $(\frac{1}{2}-3\delta)\mu(G)$ edges in $M_0$ are 3-augmentable.	
\end{observation}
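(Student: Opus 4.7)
The plan is to analyze the connected components of the symmetric difference $M_0 \oplus \opt$. Every vertex has degree at most one in each of $M_0$ and $\opt$, so the components of $M_0 \oplus \opt$ are alternating paths and even alternating cycles. The first key observation is that, because $\opt$ is a maximum matching, no component can be an odd-length path that contains more $M_0$-edges than $\opt$-edges---otherwise swapping along it would produce a matching strictly larger than $\opt$. Thus every odd component of $M_0 \oplus \opt$ is an augmenting path for $M_0$, and each such path contributes exactly one more $\opt$-edge than $M_0$-edge to the totals.

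Second, I would introduce two quantities: let $a_1$ denote the number of $3$-augmentable edges of $M_0$ (equivalently, the number of length-$3$ augmenting paths in $M_0 \oplus \opt$), and let $A$ denote the number of $M_0$-edges that lie on augmenting paths of length at least $5$. Since each length-$\geq 5$ augmenting path uses at least two $M_0$-edges, the number of such paths is at most $A/2$. Even paths and even cycles contribute equally many $M_0$- and $\opt$-edges, so the entire discrepancy $\mu(G)-|M_0|$ comes from augmenting paths, giving
\[
\mu(G) - |M_0| \;\leq\; a_1 + \frac{A}{2}.
\]

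Finally, since the $M_0$-edges counted by $A$ are disjoint from the $a_1$ middle edges of length-$3$ augmenting paths, we have $A \leq |M_0| - a_1$. Plugging this into the previous inequality and rearranging yields $a_1 \geq 2\mu(G) - 3|M_0|$. Substituting the hypothesis $|M_0| < (\tfrac{1}{2}+\delta)\mu(G)$ gives $a_1 > 2\mu(G) - 3(\tfrac{1}{2}+\delta)\mu(G) = (\tfrac{1}{2}-3\delta)\mu(G)$, which is exactly the claimed bound.

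The argument is elementary counting on the alternating structure of $M_0 \oplus \opt$, so there is no real obstacle. The only subtle step is invoking the maximality of $\opt$ (and not merely its being maximal) to rule out odd components with more $M_0$-edges than $\opt$-edges; this is what guarantees that every odd component can be charged to an augmentation of $M_0$ and thus makes the identity $\mu(G)-|M_0|=\#\{\text{augmenting paths for }M_0\}$ available.
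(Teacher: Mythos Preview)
Your argument is correct and is exactly the standard proof of this folklore fact; the paper itself does not spell out a proof but simply cites \cite{streamingarXiv}, so there is no meaningful difference in approach to discuss.

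One small omission worth flagging: the inequality $\mu(G)-|M_0|\le a_1+\tfrac{A}{2}$ silently uses that $M_0$ is a \emph{maximal} matching, so that $M_0\oplus\opt$ has no length-$1$ augmenting paths (single \opt-edges with both endpoints unmatched). In the paper this is automatic because $M_0$ is a greedy maximal matching, but you should state it explicitly; without maximality the observation itself fails (e.g.\ take $M_0=\emptyset$). Your closing remark mentions the maximality of $\opt$ but not of $M_0$, and it is the latter that closes this case.
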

\begin{proof}
	See e.g. \cite[Lemma 1]{streamingarXiv} for a simple argument.
\end{proof}

We are now ready to analyze the approximation factor. We prove that for $\delta = \frac{1}{1000 \times 2^{13/\epsilon}}$, the matching returned by Algorithm~\ref{alg:meta} has, in expectation, size at least $(\frac{1}{2}+\delta)\mu(G)$. We first assume that $|M_0| < (\frac{1}{2} + \delta)\mu(G)$ as otherwise matching $M_0$ already achieves the desired approximation factor. By Observation~\ref{obs:length3augpaths}, this means that at least $(\frac{1}{2} - 3\delta)\mu(G)$ edges of $M_0$ are 3-augmentable; meaning that the number of edges in $M_0$ that are not 3-augmentable is at most
	\begin{equation}\label{eq:328974}
	|M_0|-\left(\frac{1}{2}-3\delta\right)\mu(G) \leq \left(\frac{1}{2} + \delta\right)\mu(G) - \left(\frac{1}{2}-3\delta\right)\mu(G) = 4\delta\mu(G) \leq 8\delta|M_0|.
	\end{equation}
	Let $i^\star \in [1/\epsilon]$ be the integer satisfying Claim~\ref{cl:layerwithlargesize}. By (\ref{eq:328974}) there are at most $8\delta|M_0|$ edges in $M_0$ and thus in $S_{i^\star}$ that are not 3-augmentable. Therefore, 
	\begin{flalign*}
	\text{\# of 3-augmentable edges in $S_{i^\star}$} &\geq |S_{i^\star}| - 8\delta|M_0| \\
	&= |S_{i^\star}| - 8\frac{1}{1000 \times 2^{13/\epsilon}} |M_0| \\
	&> |S_{i^\star}| - \frac{1}{100} \times \frac{|M_0|}{2^{13/\epsilon}}\\
	&\geq |S_{i^\star}| - \frac{1}{100} \times \frac{2^{13/\epsilon}|S_{i^\star}|}{2^{13/\epsilon}} && \text{First inequality of Claim~\ref{cl:layerwithlargesize}.}\\
	&\geq 0.99 |S_{i^\star}|.
	\end{flalign*}
	Since $0.99$ fraction of the edges in $S_{i^\star}$ are 3-augmentable, by Lemma~\ref{lem:augmentationinapartition}, there are at least 
	$$
	\left(\frac{0.99p}{4}-4p^2\right)|S_{i^\star}| \stackrel{p=0.03}{=} 0.003825 |S_{i^\star}| > 0.003|S_{i^\star}|
	$$ 
	edges in $S_{i^\star}$ whose both endpoints are matched in $M_{i^\star}$. We would like to argue that these form length 3 augmenting paths but note that an edge $e \in M_{i^\star}$ may have an endpoint that is already matched in subsets $S_1, \ldots, S_{i^\star-1}$. However, the crucial observation here is that since by the second inequality of Claim~\ref{cl:layerwithlargesize}, we have $|S_{i^\star}| \geq 2^{11}\sum_{i=1}^{i^\star-1}|S_i|$ and each edge in $S_1 \cup \ldots \cup S_{i^\star-1}$ can be connected to two edges in $M_{i^\star}$ the number of these length 3 augmenting paths that are also augmenting paths in $\opt \oplus M_0$ is at least
	$$
		0.003 |S_{i^\star}| - 2 \times \sum_{i=1}^{i^\star-1}|S_i| \geq 0.003 |S_{i^\star}| - 2 \times \frac{|S_{i^\star}|}{2^{11}} > 0.002 |S_{i^\star}|.
	$$
	Each of these augmenting paths can be used to increase size of $M_0$ by one, therefore the final matching has size at least
	\begin{flalign*}
		|M_0| + 0.002|S_{i^\star}| \geq |M_0| + \frac{0.002}{2^{13/\epsilon}}|M_0| = \left(1+\frac{1}{500\times 2^{13/\epsilon}}\right)|M_0| &\geq \left(1+\frac{1}{500\times 2^{13/\epsilon}}\right)\frac{1}{2}\mu(G)\\
		&= \left(\frac{1}{2}+\frac{1}{1000\times 2^{13/\epsilon}}\right)\mu(G),
	\end{flalign*}
which proves the approximation factor is $2-\Omega(1)$ so long as $\epsilon > 0$ is a constant.
%
%
%
%

\section{Dynamic Implementation of Algorithm~\ref{alg:meta}}\label{sec:dynamicimplementation}

In this section, we describe how we can maintain Algorithm~\ref{alg:meta} in update time $O(\Delta^\epsilon + \polylog n)$.

\subsection{Tools} 
We borrow two black-box tools from the previous works. The first one is a simple corollary of the algorithm of Gupta and Peng \cite{DBLP:conf/focs/GuptaP13}, see also \cite{DBLP:conf/icalp/BernsteinS15} for a proof of this corollary.

\begin{lemma}[{\cite{DBLP:conf/focs/GuptaP13}}]\label{lem:lowdegree}
	Let $\Delta'$ be a fixed upper bound on the maximum degree of a graph at all times. Then we can maintain a $(1+\epsilon)$ approximate matching deterministically under edge insertions and deletions in worst-case update time $O(\Delta'/\epsilon^{2})$ per update.
\end{lemma}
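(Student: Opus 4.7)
The plan is to follow the Gupta--Peng framework augmented with the worst-case \emph{spreading} technique of Bernstein--Stein. The proof rests on two classical ingredients. First, in any graph with $m'$ edges, a $(1+\epsilon)$-approximate maximum matching can be computed deterministically in $O(m'/\epsilon)$ time by iteratively augmenting along vertex-disjoint augmenting paths of length at most $O(1/\epsilon)$; once no such short path exists, a standard Hopcroft--Karp style counting argument gives the $(1+\epsilon)$-approximation. Second, if at some moment we hold a $(1+\epsilon/c)$-approximate matching $M$ for a suitable absolute constant $c$, then after $k$ edge updates it remains $(1+\epsilon)$-approximate as long as $k \leq \Theta(\epsilon|M|)$: each insertion or deletion shifts both $\mu(G)$ and $|M|$ (with trivial removal of any matching edge that is deleted) by at most one, so the ratio $\mu(G)/|M|$ drifts in a controlled way.

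The worst-case update bound is obtained by spreading. I would maintain an active matching $M_{\mathrm{cur}}$ used to answer queries, and in parallel recompute a fresh $(1+\epsilon/c)$-approximate matching $M_{\mathrm{next}}$ of the current graph in the background, distributing its $O(n\Delta'/\epsilon)$ total cost uniformly over the next $\Theta(\epsilon|M_{\mathrm{cur}}|)$ updates. When $M_{\mathrm{next}}$ is complete, promote it to be the new $M_{\mathrm{cur}}$ and restart. By the stability observation, $M_{\mathrm{next}}$ is still $(1+\epsilon)$-approximate of the graph that exists at the moment of the swap. In the regime $|M_{\mathrm{cur}}| = \Omega(n)$, the background work per update is $O(n\Delta'/(\epsilon^2|M_{\mathrm{cur}}|)) = O(\Delta'/\epsilon^2)$, which matches the stated bound.

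The main obstacle is the small-matching regime $|M_{\mathrm{cur}}| = o(n)$, where the naive spreading breaks down because $n$ is much larger than $|M_{\mathrm{cur}}|$. The standard remedy, carried out in the Bernstein--Stein proof, is to restrict the recomputation to the subgraph induced by a ``relevant'' vertex set of size $O(|M_{\mathrm{cur}}|)$ obtained by expanding a vertex cover of the previous matching by its neighborhood; since each such vertex has degree at most $\Delta'$, the recomputation touches only $O(|M_{\mathrm{cur}}|\Delta')$ edges and costs $O(|M_{\mathrm{cur}}|\Delta'/\epsilon)$, so spreading it over $\Theta(\epsilon|M_{\mathrm{cur}}|)$ updates again yields $O(\Delta'/\epsilon^2)$ work per update uniformly. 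For the trivial tail case $|M_{\mathrm{cur}}| = O(1/\epsilon)$, one can fall back on maintaining an exact maximum matching explicitly in $O(\Delta')$ time per update, which suffices. Combining these pieces gives a deterministic algorithm with worst-case update time $O(\Delta'/\epsilon^2)$, as claimed.
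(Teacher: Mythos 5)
The paper does not prove this lemma at all: it is imported as a black box, with Gupta--Peng~\cite{DBLP:conf/focs/GuptaP13} cited for the result and Bernstein--Stein~\cite{DBLP:conf/icalp/BernsteinS15} cited explicitly for a proof of the corollary. Your reconstruction follows exactly that cited template --- a static $O(m'/\epsilon)$ Hopcroft--Karp-style $(1+\epsilon)$-approximation, the stability-under-$\Theta(\epsilon|M|)$-updates observation, periodic rebuilds on the $O(|M_{\mathrm{cur}}|\Delta')$-edge subgraph incident to (a maximal extension of) the current matching, and Bernstein--Stein's spreading to deamortize --- and it is essentially correct. One small imprecision worth fixing if you write this out in full: the relevant edge set is the set of edges \emph{incident to} the vertex cover $V(M')$ of a maximal extension $M'$ of $M_{\mathrm{cur}}$ (of size $O(|M_{\mathrm{cur}}|\Delta')$), not the edges of the subgraph induced on $V(M')$ together with its neighborhood, which could have $\Theta(|M_{\mathrm{cur}}|\Delta'^2)$ edges; and the tail case $|M_{\mathrm{cur}}|=O(1/\epsilon)$ costs $O(\Delta'/\epsilon)$, not $O(\Delta')$, per update (still within budget).
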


We use Lemma~\ref{lem:lowdegree} only for the last step of Algorithm~\ref{alg:meta} in which we need to maintain a maximum matching of $M_0 \cup M_1 \cup \ldots \cup M_{1/\epsilon}$ which is a graph with maximum degree $O(1/\epsilon)$.

The second black-box result that we use is due to the recent algorithm of Behnezhad~\etal{}~\cite{misfocs} that asserts a random greedy maximal matching can be maintained efficiently.

\begin{lemma}[{\cite{misfocs}}]\label{lem:greedyMM}
	Let $\pi$ be a random ranking where $\pi(e) \in [0, 1]$ for each edge $e$ is drawn uniformly at random upon its arrival. Then maximal matching \lfmm{G, \pi} can be maintained under edge insertions and deletions in expected time $O(\log^2 \Delta \times \log^2 n)$ per update (without amortization). Furthermore, for each update, the adjustment-complexity is in expectation $O(1)$ and w.h.p. $O(\log n)$.
\end{lemma}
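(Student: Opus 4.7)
The plan is to reduce the dynamic problem to answering local ``is $e$ in the matching?'' queries efficiently. Build a randomized oracle $\mathcal{O}(e)$ with the recursive definition $e \in \lfmm{G,\pi}$ iff no edge $e'$ adjacent to $e$ with $\pi(e') < \pi(e)$ is itself in $\lfmm{G,\pi}$. To evaluate $\mathcal{O}(e)$, iterate over the edges adjacent to $e$ in increasing rank order (stopping at rank $\pi(e)$), recursively call $\mathcal{O}(\cdot)$ on each, and return ``yes'' iff none is in the matching. This is well-defined because rank strictly decreases down the recursion.

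First I would bound the expected cost of a single call to $\mathcal{O}$. When the recursion reaches an edge $f$, any deeper call descends into an edge $g$ adjacent to $f$ with $\pi(g) < \pi(f)$, so the eliminator rank of $g$ is at most $\pi(f)$. Lemma~\ref{lem:sparsification} says that for any $p$, the subgraph of edges with $\pi(\elim(e)) > p$ has maximum degree $O(p^{-1}\log n)$. Telescoping this degree bound across the levels of the recursion, while conditioning on the rank of the edge visited at each level being uniform in the residual interval, yields an expected recursion-tree size of $O(\log^{2}\Delta \cdot \log^{2} n)$. This is essentially the charging argument developed in the random greedy MIS literature applied to the edge graph.

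To dynamically maintain $\lfmm{G,\pi}$ I would store, for every edge currently in the graph, its $O(\log n)$-bit rank and its recorded eliminator. On insertion of $e$, draw a fresh $\pi(e)$, call $\mathcal{O}(e)$; if $e$ enters the matching, reprocess in rank order the neighbors whose previous eliminator had rank higher than $\pi(e)$. On deletion of $e$: if $e\notin\lfmm{G,\pi}$ just remove it; otherwise both endpoints become free and we re-evaluate, in rank order, the edges that were previously eliminated by $e$. In either case, the set $\mathcal{A}$ of affected edges is precisely those whose eliminator changes, which is easy to enumerate from the bookkeeping.

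The main obstacle, and the heart of the argument, is to show $\E[|\mathcal{A}|] = O(1)$ and $|\mathcal{A}| = O(\log n)$ with high probability. This rests on a symmetry/exchangeability property of the random greedy matching: under the oblivious adversary, re-randomizing a single edge's rank perturbs the matching in only $O(1)$ places in expectation, essentially because each edge has an equal chance of being the one whose insertion or deletion triggers a cascade and the total cascade length summed over a symmetric neighborhood is a telescoping sum. A Chernoff-style concentration, applied level-by-level over the recursion tree, upgrades this to the $O(\log n)$ w.h.p.\ bound. Combining $\E[|\mathcal{A}|]=O(1)$ with the per-query cost $O(\log^{2}\Delta \cdot \log^{2} n)$ gives the claimed total expected update time, and because $\pi$ is independent of the adversarial update sequence the bound is worst-case per update rather than merely amortized.
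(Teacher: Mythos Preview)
The paper does not prove this lemma; it is quoted verbatim as a black-box tool from \cite{misfocs} (see the sentence preceding the lemma: ``The second black-box result that we use is due to the recent algorithm of Behnezhad~\etal{}~\cite{misfocs}\ldots''). There is therefore no in-paper proof to compare your attempt against.

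As for the content of your sketch: the overall architecture---a local recursive oracle, the sparsification lemma to bound degrees at each recursion level, and a separate bound on the number of edges whose eliminator changes---does match the shape of the argument in \cite{misfocs}. However, the two places you flag as ``the heart of the argument'' are exactly the technically difficult parts of that paper, and your one-line justifications for them are not proofs. In particular, the assertion that ranks along a recursion path can be treated as ``uniform in the residual interval'' and then telescoped is not how the analysis goes: the ranks encountered along the recursion are strongly correlated with the branching structure, and precisely this correlation is why earlier oracle analyses (Nguyen--Onak, Yoshida--Yamamoto--Ito) did not yield polylogarithmic bounds. Similarly, the $\E[|\mathcal{A}|]=O(1)$ adjustment bound is not a symmetry/telescoping observation; in \cite{misfocs} it is obtained through a careful coupling between the matchings before and after the update. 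So your plan identifies the right pieces but does not actually carry out either of the two nontrivial steps.
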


%
%

\subsection{Data Structures \& Setup}

\newcommand{\issampled}[0]{\ensuremath{\text{{\normalfont\textsf{ is\_sampled}}}}}
\newcommand{\partition}[0]{\ensuremath{\text{{\normalfont \textsf{partition}}}}}

Algorithm~\ref{alg:meta} computes two types of matchings: (1) Matching $M_0 = \lfmm{G, \pi_0}$ which is a standard random greedy maximal matching of the whole graph $G$. (2) Matchings $M_1, \ldots, M_{1/\epsilon}$ which are computed on specific subgraphs of $G$. Observe in Algorithm~\ref{alg:meta} that each matching $M_i$ for $i \in \{1, \ldots, 1/\epsilon\}$ is the union of two random greedy matchings $\lfmm{G'^A_i, \pi_i}$ and $\lfmm{G'^B_i, \pi_i}$. A crucial observation here is that these two graphs $G'^A_i$ and $G'^B_i$ by definition are vertex disjoint. Therefore defining graph $G_i$ to be the union of these two graphs, $M_i$ would be equivalent to \lfmm{G_i, \pi_i}.

Now we have $\frac{1}{\epsilon}+1$ graphs $G, G_1, \ldots, G_{1/\epsilon}$ and $\frac{1}{\epsilon}+1$ independently drawn rankings $\pi_0, \pi_1, \ldots, \pi_{1/\epsilon}$. Therefore, if a priori these graphs were fixed and remained unchanged after each edge insertion/deletion, we could use Lemma~\ref{lem:greedyMM} to update each one of them in expected time $\polylog n$ requiring only a total update-time of $O(\frac{1}{\epsilon}) \cdot \polylog n$. However, as highlighted in Section~\ref{sec:overview} the challenge is that the vertex sets of graphs $G_1, \ldots, G_{1/\epsilon}$ are {\em adaptively} determined based on matching $M_0$. That is, a single edge update that changes matching $M_0$ may lead to many {\em vertex} insertions/deletions to graphs $G_1, \ldots, G_{1/\epsilon}$ that are generally much harder to handle than edge updates. Therefore, we need to be careful about what to maintain and how to do it to ensure these vertex updates can be determined and handled efficiently.

\smparagraph{Fixing the randomizations.} To maintain the matching of Algorithm~\ref{alg:meta}, we fix all the randomizations required. There are two types of randomizations involved: (1) Randomizations on the edges, such as the random rankings and edge samplings; as in Lines \ref{line:m0},\ref{line:mi}, and \ref{line:sampleSi}. (2) Randomizations on the vertices; as in Line~\ref{line:partitionU}. We reveal the randomizations on the vertex set in the preprocessing step as it is static. But we reveal the randomizations on the edges upon their arrival. For completeness, we mention the precise random bits drawn below. 

For each edge $e$, we draw the following upon its arrival:

\begin{highlighttechnical}
	\begin{description}
		\item $\issampled_i(e) \in \{0, 1\}$: This is drawn for any $i \in [1/\epsilon]$ independently. It is  $1$ with probability $p$ and $0$ otherwise. It determines the outcome of edge-sampling in Line~\ref{line:sampleSi} of the algorithm.
		\item $\pi_i(e) \in [0, 1]$: The rank of $e$ in ranking $\pi_i$. This is drawn for any $i \in \{0, \ldots, 1/\epsilon\}$.
	\end{description} 
\end{highlighttechnical}

And for each vertex $v$, in the pre-processing step, we draw:

\begin{highlighttechnical}
	\begin{description}
		\item $\partition_i(v) \in \{A, B\}$: This is drawn for any $i \in [1/\epsilon]$ independently. It is $A$ with probability $0.5$ and $B$ otherwise. The value determines whether $v$ would join $U^A_i$ or $U^B_i$ if it is partitioned in Line~\ref{line:partitionU} of the algorithm.
	\end{description} 
\end{highlighttechnical}

\smparagraph{Data structures.} Let us for simplicity define $G_0 = G$. For any vertex $v$ and any $i \in \{0, \ldots, \frac{1}{\epsilon}\}$ we maintain the following data structures:

\begin{highlighttechnical}
	\begin{description}
		\item $k_i(v)$: If $v$ is not part of graph $G_i$ or if it is unmatched in \lfmm{G_i, \pi_i} then $k_i(v) = 1$. Otherwise, if $e$ is the edge incident to $v$ that is in matching $\lfmm{G_i, \pi_i}$ then $k_i(v) = \pi_i(e).$
		\item $N_i(v)$: The set of neighbors of $v$ in graph $G_i$. This set is stored as a self-balancing binary search tree in which each neighbor $u$ of $v$ is indexed by $\pi_i(\elim_{G_i, \pi_i}(uv))$. (If $v$ is not in the vertex-set of $G_i$ then simply $N_i(v) = \emptyset$.)
	\end{description} 
\end{highlighttechnical}

\subsection{The Update Algorithm} 

We run $1+1/\epsilon$ instances of Lemma~\ref{lem:greedyMM} for maintaining greedy matchings of $G_0, \ldots, G_{1/\epsilon}$. Moreover, we run a single instance of Lemma~\ref{lem:lowdegree} on the edges in union of matchings $M_0 \cup \ldots \cup M_{1/\epsilon}$. 

As mentioned previously, a single edge update to graph $G_0$ may change the structure of graphs $G_1, \ldots, G_{1/\epsilon}$ and in particular may lead to vertex insertions or deletions in them. Therefore, our main focus in this section is to show how we can detect these vertices that join/leave graphs $G_1, \ldots, G_{1/\epsilon}$ and their incident edges in these graphs efficiently. Before that, we need the following lemma. The proof is a simple consequence of Lemma~\ref{lem:sparsification} and thus we defer it to Appendix~\ref{sec:missingproofs}.

\begin{lemma}\label{lem:updatedsafteredgeupdate}
	Suppose that an edge $e$ is inserted to or deleted from a graph $G_i$ for some $i \in \{0, \ldots, 1/\epsilon\}$. After updating matching $\lfmm{G_i, \pi_i}$ (e.g. by Lemma~\ref{lem:greedyMM}) and getting the list $L$ of edges that joined or left the matching, we can update $k_i(\cdot)$ and $N_i(\cdot)$ accordingly in expected time $\polylog n$.
\end{lemma}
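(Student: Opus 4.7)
The plan is to split the maintenance into updating the scalars $k_i(\cdot)$ and updating the BSTs $N_i(\cdot)$, given the triggering edge update $e = uv$ on $G_i$ together with the list $L$ of matching changes returned by Lemma~\ref{lem:greedyMM}. For $k_i(\cdot)$ the work is essentially trivial: $k_i(w)$ can change only when $w$ is an endpoint of some $f \in L$, so I walk through $L$ and reset the endpoint values by inspection (to $\pi_i(f)$ if $f$ joins the matching, and to $1$ or to the rank of another edge in $L$ now matching the endpoint if $f$ leaves). Since Lemma~\ref{lem:greedyMM} guarantees $\E[|L|] = O(1)$ and $|L| = O(\log n)$ w.h.p., this step costs $O(\log n)$.

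For $N_i(\cdot)$ two kinds of updates are needed. The direct change caused by $e$ itself (inserting or removing $e$ from $N_i(u)$ and from $N_i(v)$) is a single BST operation at each endpoint and costs $O(\log n)$, with the new key $\pi_i(\elim_{G_i,\pi_i}(uv)) = \min(k_i(u), k_i(v))$ readable from the already-updated $k_i$ values. The harder task is to re-key every edge whose eliminator rank shifts as a result of the matching changes. Such an edge must be incident to some $f = xy \in L$, and by definition of the eliminator, its key can only shift to or away from $\pi_i(f)$.

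For each $f = xy \in L$ with $p := \pi_i(f)$, I would exploit the fact that $N_i(x)$ (and $N_i(y)$) is indexed precisely by the eliminator ranks of its edges. Using the BST I enumerate all neighbors $z$ of $x$ whose current key is at least $p$; these are exactly the edges whose eliminator either was $f$ (if $f$ leaves) or becomes $f$ (if $f$ joins). For each such $z$ I compute the new eliminator rank from the updated $k_i$ values at $x$ and $z$, and perform a delete/insert in both $N_i(x)$ and $N_i(z)$, each costing $O(\log n)$.

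The main obstacle is bounding the total cost of the last step in expectation. Lemma~\ref{lem:sparsification}, applied with threshold $p$, shows that the number of neighbors of $x$ with eliminator rank at least $p$ is $O(p^{-1} \log n)$ w.h.p., so the work triggered by a single $f \in L$ with rank $p$ is $O(p^{-1} \log^2 n)$. To convert this into a global $\polylog n$ bound I need $\E\bigl[\sum_{f \in L} p_f^{-1}\bigr] = \polylog n$. I would obtain this from the same charging/potential argument that underlies the $O(\log^2 \Delta \cdot \log^2 n)$ expected update-time of Lemma~\ref{lem:greedyMM}: the cascade of edges whose eliminator shifts in our analysis is structurally the same cascade that algorithm already traverses inside its own update procedure, so its expected-size bound transfers to our setting unchanged and yields the claimed $\polylog n$ expected total running time.
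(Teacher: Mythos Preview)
Your treatment of $k_i(\cdot)$ and of the direct insertion/deletion of $e$ into the two adjacency BSTs is fine and matches the paper. The issue is the last paragraph, where you bound the re-keying cost.

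You reduce the problem to showing $\E\bigl[\sum_{f\in L}\pi_i(f)^{-1}\bigr]=\polylog n$ and then defer this to ``the same charging/potential argument that underlies Lemma~\ref{lem:greedyMM}.'' That is not a proof: Lemma~\ref{lem:greedyMM} is invoked here purely as a black box (it outputs $L$ and promises $|L|=O(\log n)$ w.h.p.\ and polylog update time), and nothing in its \emph{statement} controls $\sum_{f\in L}\pi_i(f)^{-1}$. Appealing to the internals of its analysis is outside what you are given. Worse, the ranks $\pi_i(f)$ for $f\in L$ are not uniform; they are adaptively selected by the cascade, so one cannot simply integrate $1/p$ over $[0,1]$ edge-by-edge.

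The paper closes this gap with a single structural observation you are missing: because the greedy matching processes edges in increasing rank order, inserting or deleting $e$ cannot change the matching status of any edge with rank below $\pi_i(e)$. Consequently every $f\in L$ satisfies $\pi_i(f)\ge\pi_i(e)$, and every edge whose eliminator changes has (old) eliminator rank at least $\pi_i(e)$. Thus one may use the \emph{single} threshold $\pi_i(e)$ for all of $L$: by Lemma~\ref{lem:sparsification} each endpoint of each $f\in L$ has at most $\min\{\Delta,\,O(\pi_i(e)^{-1}\log n)\}$ neighbors to re-key, giving total cost $|L|\cdot\min\{\Delta,\,O(\pi_i(e)^{-1}\log n)\}\cdot\polylog n$. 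Finally, $\pi_i(e)$ itself is uniform in $[0,1]$ (it was drawn independently of whether $e$ gets updated), so $\E\bigl[\min\{\Delta,\pi_i(e)^{-1}\}\bigr]=O(\log\Delta)$, yielding the claimed $\polylog n$ expected time. Once you add the sentence ``every $f\in L$ has rank at least $\pi_i(e)$,'' your argument collapses to this one; without it, the final expectation is unjustified.
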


Consider insertion or deletion of an edge $f$. We use the following procedure to maintain our data structures and finally the matching returned by Algorithm~\ref{alg:meta}.

\smparagraph{Step 1: Updating $M_0$.} We first update matching $M_0$. This is done by  Lemma~\ref{lem:greedyMM} in $\polylog n$ expected time. After that, we also update data structures $k_0(v)$ and $N_0(v)$ where necessary using Lemma~\ref{lem:updatedsafteredgeupdate}. There are two cases. If matching $M_0$ changes after the update, then we may have to update the vertex sets of graphs $G_1, \ldots, G_{1/\epsilon}$. This is the operation that is costly and we handle it in the next steps. If $M_0$ does not change, the only remaining update is to see if $f$ itself is part of a graph $G_i$ and reflect that. This only takes polylogarithmic time using Lemmas~\ref{lem:greedyMM} and \ref{lem:updatedsafteredgeupdate}.

\smparagraph{Step 2:  Updating vertex-sets of $G_1, \ldots, G_{1/\epsilon}$.} The vertex-set of each graph $G_i$ is composed of four disjoint subsets $V'^A_i, U^A_i, V'^B_i, $ and $U^B_i$. One can confirm from Algorithm~\ref{alg:meta} that whether a vertex $v$ belongs to one of these sets (and which one if so) can be uniquely determined by knowing the edge incident to $v$ that is in matching $M_0$ or knowing that no such edge exists. Therefore:
\begin{observation}\label{obs:edgetovertexupdate}
	If after the update, a vertex $v$ leaves or is added to the vertex-set of a graph $G_i$, then there must exist an edge connected to $v$ that either joined or left matching $M_0$.
\end{observation}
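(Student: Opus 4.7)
\textbf{Proof plan for Observation~\ref{obs:edgetovertexupdate}.}

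The plan is to prove the contrapositive: if no edge incident to $v$ has joined or left $M_0$, then the membership status of $v$ in the vertex-set of $G_i$ does not change. The key conceptual point is that, because all randomness (edge rankings $\pi_j$, the sampling bits $\issampled_i(\cdot)$, and the vertex partitions $\partition_i(\cdot)$) is fixed once and for all at the moment of arrival of the edge/vertex and never re-sampled, the only ``dynamic'' quantity determining the status of $v$ is the identity of the (unique) edge of $M_0$ incident to $v$, which I will denote $e_v$ (with $e_v = \nill$ if $v$ is unmatched by $M_0$).

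Concretely, I would inspect each of the four disjoint subsets comprising the vertex-set of $G_i$, namely $V'^A_i$, $V'^B_i$, $U^A_i$, $U^B_i$, and verify that membership of $v$ in each is a deterministic function of $e_v$ and the fixed randomness. For instance, $v \in V^A_i$ precisely when $e_v = uv$ for some $u$ with $\pi_0(e_v) \in (\Delta^{-i\epsilon}, \Delta^{-(i-1)\epsilon}]$ (with the obvious modification when $i = 1/\epsilon$) and $v$ has lower ID than $u$; then $v \in V'^A_i$ iff additionally $\issampled_i(e_v) = 1$. Analogously for $V'^B_i$. For the ``unmatched'' side, $v \in U_i$ iff $e_v = \nill$ or $\pi_0(e_v) > \Delta^{-(i-1)\epsilon}$, and $v \in U^A_i$ iff $v \in U_i$ and $\partition_i(v) = A$ (symmetrically for $U^B_i$). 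In all four cases, the only input that can change between updates is $e_v$.

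Consequently, if $v$ transitions between ``in the vertex-set of $G_i$'' and ``not in the vertex-set of $G_i$'' as a result of the current update, then $e_v$ must have changed. A change in $e_v$ is of one of three forms: (i) $e_v$ was $\nill$ before and equals some edge $f$ after, meaning $f$ joined $M_0$ and is incident to $v$; (ii) $e_v$ was some edge $f$ before and is $\nill$ after, meaning $f$ left $M_0$ and is incident to $v$; or (iii) $e_v$ changed from one edge to another, which already requires at least one edge incident to $v$ to have left $M_0$. In every case there exists an edge incident to $v$ that joined or left $M_0$, which is exactly what the observation asserts.

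The only even mildly delicate point is a careful reading of Algorithm~\ref{alg:meta} to confirm that every condition used to place $v$ into one of the four subsets is expressible purely in terms of $e_v$ and the fixed random bits; once that is checked, the conclusion is immediate. I would not expect any substantive obstacle here---the statement is essentially an auditing step that justifies charging all vertex-level updates of $G_1, \ldots, G_{1/\epsilon}$ to changes in $M_0$, which is the quantity the subsequent update-time analysis knows how to control via Lemma~\ref{lem:greedyMM} and Lemma~\ref{lem:sparsification}.
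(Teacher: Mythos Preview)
Your proposal is correct and follows the same approach as the paper: the paper's justification is simply the sentence preceding the observation, namely that whether $v$ belongs to any of $V'^A_i, V'^B_i, U^A_i, U^B_i$ is uniquely determined by the edge of $M_0$ incident to $v$ (or its absence), together with the fixed randomness. Your writeup just unpacks that sentence in more detail, which is fine.
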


By Observation~\ref{obs:edgetovertexupdate}, to update the vertex-sets, it suffices to only  iterate over vertices whose matching edge in $M_0$ has changed and determine which graph $G_i$ they should belong to. The procedure is a simple consequence of the way Algorithm~\ref{alg:meta} constructs these graphs and also the randomizations fixed previously. We provide the details in Algorithm~\ref{alg:updatevertexsets} for completeness.

It has to be noted that we are only updating the vertex-sets in this step. In particular, for a vertex $v$ that e.g. joins graph $G_i$, we do not construct its adjacency list $N_i(v)$ yet. This is postponed to the next step after all the vertex sets are completely updated.

\begin{tboxalg2e}{Updating vertex-sets of $G_1, \ldots, G_{1/\epsilon}$.}
\label{alg:updatevertexsets}
\begin{algorithm}[H]
	\DontPrintSemicolon
	\SetAlgoSkip{bigskip}
	\SetAlgoInsideSkip{}
	
	\For{any vertex $v$ whose match-status in $M_0$ has changed after the update}{
		\If{$v$ is now unmatched}{
			$\ell_v \gets 0$.
		}
		\Else{
			Let $e$ be the edge incident to $v$ that is now in matching $M_0$.\;
			Let $S_j$ be the partition to which $e$ will be assigned in Algorithm~\ref{alg:meta} based on $\pi_0(e)$.\;
			$\ell_v \gets j$.
		}
		\For{any $i \in [1/\epsilon]$}{
			\If{$\ell_v < i$}{
				If $\partition_i(v) = A$, then $v \in U^A_i$. Otherwise $\partition_i(v) = B$, thus $v \in U^B_i$.\;
			}
			\If{$\ell_v = i$}{
				\tcp{At this state, $v$ should be matched in $M_0$ through its incident edge $e$.}
				\If{$\issampled_i(e) = 0$}{
					Vertex $v$ is not in the vertex-set of graph $G_i$.
				}\Else{
					If $v$ is the lower-ID endpoint of $e$, then $v \in V'^A_i$. Otherwise, $v \in V'^B_i$.
				}
			}
			\If{$\ell_v > i$}{
				Vertex $v$ is not in the vertex-set of graph $G_i$.
			}
		}
	}
\end{algorithm}
\end{tboxalg2e}

\smparagraph{Step 3: Updating adjacency lists of $G_1, \ldots, G_{1/\epsilon}$ and their matchings.} The previous step updated the vertex-sets. Here, we update the adjacency lists and the matchings $M_1, \ldots, M_{1/\epsilon}$. Precisely, we update data structure $N_i(v)$ for each vertex $v$ and each $i \in [1/\epsilon]$ where necessary. Note that for any vertex $v$, both $k_0(v)$ and adjacency list $N_0(v)$ were already updated in Step 1.

First, for any vertex $v$ that leaves a graph $G_i$, we immediately remove its incident edges from the graph one by one. Each one of these should be regarded as edge deletions and thus we can use Lemma~\ref{lem:greedyMM} to update $M_i$. We then update $k_i$ and $N_i$ data structures accordingly using Lemma~\ref{lem:updatedsafteredgeupdate}.

Next, for any vertex $v$ that is added to the vertex set of a graph $G_i$, we have to determine the set of its neighbors in this graph. To do so, we take the steps  formalized as Algorithm~\ref{alg:updateadjacencylists}. A crucial observation to note before reading the description of Algorithm~\ref{alg:updateadjacencylists} is stated below. The proof is a direct consequence of the greedy structure of RGMM, thus we defer it to Appendix~\ref{sec:missingproofs}.

\begin{claim}\label{cl:gkremainsunchanged}
Suppose that the edge $f$ that is being inserted to/deleted from $G$ is part of matching $M_0$ (if deleted before deletion and if inserted after insertion). Note that if this was not the case, then updating $f$ would not change the vertex sets of $G_1, \ldots, G_k$. Also assume that $f$ belongs to partition $S_j$ of matching $M_0$ in Algorithm~\ref{alg:meta}. Then this update may only affect vertex sets of graphs $G_1, \ldots, G_j$. In particular, any graph $G_k$ with $k > j$ remains unchanged after insertion or deletion of $f$.
\end{claim}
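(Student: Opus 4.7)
The plan is to combine the rank-monotonicity of the greedy maximal matching algorithm with the layered definition of the partitions $S_i$ to show that the update at $f$ can only perturb the ``early'' (high-rank) partitions and hence cannot touch any $G_k$ with $k > j$.

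First, I would record the standard local-change property of greedy maximal matching: inserting or deleting $f$ from $G$ leaves the $\lfmm{G,\pi_0}$-status of every edge $e$ with $\pi_0(e) < \pi_0(f)$ unchanged. The reason is that the greedy algorithm processes edges in increasing order of rank, so the outcome at $e$ depends only on the outcomes of edges with rank strictly less than $\pi_0(e) \leq \pi_0(f)$, and these outcomes are identical in $G$ and in the updated graph. Hence only edges with rank at least $\pi_0(f)$ can flip their matching status in $M_0$.

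Next, I would convert this rank bound into a partition bound. Since $f \in S_j$, we have $\pi_0(f) > \Delta^{-j\epsilon}$ when $j < 1/\epsilon$ (and the case $j = 1/\epsilon$ makes the claim vacuous), so every edge with rank $\geq \pi_0(f)$ that is in $M_0$ before or after the update falls into some $S_l$ with $l \leq j$ directly from the range definitions of the partitions. In particular, $S_{j+1}, \ldots, S_{1/\epsilon}$ are identical as edge sets before and after the update.

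From here I would fix $k > j$ and check that each of the four vertex subsets $V'^A_k, V'^B_k, U^A_k, U^B_k$ defining the vertex set of $G_k$ is unchanged. Since $S_k$ is unchanged, so are $V^A_k, V^B_k$, and then $V'^A_k, V'^B_k$ via the pre-drawn bits $\issampled_k(\cdot)$. For $U^A_k$ and $U^B_k$, a vertex $v$ lies in $U_k$ iff its incident $M_0$-edge (if any) belongs to $\cup_{i < k} S_i$; if this incident edge changes as a consequence of the update, then by the previous paragraph both its old and new values sit inside $\cup_{l \leq j} S_l \subseteq \cup_{i < k} S_i$, so $v$ stays in $U_k$, with the $A/B$ split pinned by the fixed bit $\partition_k(v)$. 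Finally, the only edge of $G$ that changed is $f$ itself, and $f \notin G_k$ because both of its endpoints are matched by $f \in S_j \subseteq \cup_{i < k} S_i$ in whichever copy of $M_0$ contains $f$, and therefore lie in $U_k$ rather than in $V'^A_k \cup V'^B_k$. Thus $G_k$ remains unchanged, as claimed. The only step that really needs care is the rank-monotonicity invariant in the first paragraph; once that is in place, everything else is a bookkeeping chain through the definitions of the $S_l$'s, the $U_i$'s, and the fixed random bits.
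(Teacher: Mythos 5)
Your proposal is correct and rests on the same key idea as the paper's proof: the rank-monotonicity of greedy maximal matching (edges of rank below $\pi_0(f)$ keep their $M_0$-status), combined with the observation that $\pi_0(f) > \Delta^{-j\epsilon}$ places all flipped $M_0$-edges in $S_1 \cup \cdots \cup S_j$. Your presentation, which checks that the four vertex subsets of $G_k$ are unchanged and then separately notes $f \notin G_k$, is a slightly cleaner reorganization of the paper's two-sided "no edge leaves $G_k$ / no edge enters $G_k$" argument, but the substance is the same.
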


Claim~\ref{cl:gkremainsunchanged} is algorithmically useful in the following way. Suppose that $f \in S_j$ and let $\alpha$ be the minimum rank considered to be in $S_j$ in Algorithm~\ref{alg:meta}. Then we can remove all edges whose eliminator ranks are less than $\alpha$ from $G$, and the remaining graph will include all edges that we have to consider for graphs $G_1, \ldots, G_j$. This, by Lemma~\ref{lem:sparsification} prunes the degrees to $\widetilde{O}(\alpha^{-1})$ and helps reducing the running time. See Algorithm~\ref{alg:updateadjacencylists} and Lemma~\ref{lem:correctnessandrunningtime} for the details.

\begin{tboxalg2e}{Updating adjacency lists of $G_1, \ldots, G_{1/\epsilon}$.}
\label{alg:updateadjacencylists}
\begin{algorithm}[H]
	\DontPrintSemicolon
	\SetAlgoSkip{bigskip}
	\SetAlgoInsideSkip{}
	
	Let $f$ be the original edge that was inserted/deleted from $G$ and suppose that it changed matching $M_0$ (otherwise, vertex-sets of $G_1, \ldots, G_{1/\epsilon}$ will remain the same.)\;
	
	Suppose that $f \in S_j$ (if $f$ was deleted, $f \in S_j$ before deletion, and if inserted, $f \in S_j$ after it).\;
	\tcp{Note that $f$ has to be in $M_0$ to change it once updated. Thus it should belong to a set $S_j$.}
	
	If $j < 1/\epsilon$ then let $\alpha \gets \Delta^{-i\epsilon}$, otherwise if $j = 1/\epsilon$ let $\alpha \gets 0$.\;
	\tcp{$\alpha$ is the lower bound on edge ranks that get partitioned to $S_j$ according to Algorithm~\ref{alg:meta}.}
	
	\For{any vertex $v$ and any $i \in [1/\epsilon]$ such that $v$ joins the vertex set of $G_i$}{
		\tcp{We can detect these vertices efficiently by only going through the changes found in Step 2 without exhaustively checking all vertices in the graph.}
		
		$L_v \gets \{u \in N_0(v) \mid \pi_0(\elim_{G_0, \pi_0}(uv)) \geq \alpha \}$\;
		
		\tcp{Set $L_v$ has size $\min\{\Delta, O(\alpha^{-1}\log n)\}$ by Lemma~\ref{lem:sparsification} and can be constructed in time $\widetilde{O}(|L_v|)$ since all edges in $N_0(v)$ are already indexed by their eliminator ranks.}
		
		\For{any neighbor $u \in L_v$ of $v$}{
			\If{$(v \in V'^A_i \text{ and } u \in U^A_i)$ or $(v \in V'^B_i \text{ and } u \in U^B_i)$ or $(u \in V'^A_i \text{ and } v \in U^A_i)$ or $(u \in V'^B_i \text{ and } v \in U^B_i)$}{
				Add $u$ to $N_i(v)$ and $v$ to $N_i(u)$.\;
				Update matching $M_i$ using Lemma~\ref{lem:greedyMM} according to this edge insertion.\;
				Update $k_i(\cdot)$ and $N_i(\cdot)$ as necessary by this edge insertion using Lemma~\ref{lem:updatedsafteredgeupdate}.
			}
		}
	}
\end{algorithm}
\end{tboxalg2e}

\smparagraph{Step 4: Updating the final matching.} Finally, recall that we run multiple instances of Lemma~\ref{lem:lowdegree} to maintain a $(1+\epsilon)$ approximate maximum matching of graph $M_0 \cup \ldots \cup M_{1/\epsilon}$ which will include our final matching. Throughout the updates above, we keep track of all edges that leave/join these matchings and for each one of them we update this final matching via Lemma~\ref{lem:lowdegree}.

\subsection{Correctness \& Running Time of Update Algorithm}

In this section, as the title describes, we prove the correctness of the update algorithm above and analyze its running time. Namely, we prove the following lemma.

\begin{lemma}\label{lem:correctnessandrunningtime}
	The update algorithm of previous section correctly updates all data structures and the matching and its expected running time per update without amortization is $O(\Delta^\epsilon \polylog n)$.
\end{lemma}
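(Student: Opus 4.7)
The plan is to prove correctness of the four update steps in order, and then bound the expected running time by a partition-level trade-off: the probability that the triggering edge $f$ lands in a partition $S_j$ is small enough to cancel the sparsification-controlled cost of the work done when it does.

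Correctness reduces to justifying the ingredients that each step uses as black boxes. Step~1 follows from Lemma~\ref{lem:greedyMM} applied to $G_0$ together with Lemma~\ref{lem:updatedsafteredgeupdate}. Step~2 is correct because Observation~\ref{obs:edgetovertexupdate} restricts vertex-set movement to vertices whose $M_0$-matching edge changed, and Algorithm~\ref{alg:updatevertexsets} then determines each such vertex's new side purely from the pre-drawn bits $\partition_i(v)$, $\issampled_i$, and its ID. For Step~3 the substantive claim is that for every vertex $v$ that newly joins a graph $G_i$ with $i \leq j$ (where $f \in S_j$), the set $L_v$ contains every new neighbor of $v$ in $G_i$. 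I will verify this by showing that any edge $uv$ of such a $G_i$ forces $\pi_0(\elim_{G_0,\pi_0}(uv)) > \Delta^{-i\epsilon} \geq \alpha$: the matching edge of $v$ in $M_0$ sits in $S_i$ and hence has rank $>\Delta^{-i\epsilon}$, while any matching edge of $u$ lies in $S_1 \cup \ldots \cup S_{i-1}$ or does not exist, giving rank $>\Delta^{-(i-1)\epsilon}$; the smaller of these is still $>\Delta^{-i\epsilon}\geq\alpha$, so $u \in L_v$. Combined with Claim~\ref{cl:gkremainsunchanged}, which rules out any change to $G_{j+1}, \ldots, G_{1/\epsilon}$, and the straightforward removal path for vertices leaving a graph (iterate over the already maintained $N_i(v)$), this certifies Step~3. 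Step~4 then follows directly from Lemma~\ref{lem:lowdegree} since $M_0 \cup \ldots \cup M_{1/\epsilon}$ always has maximum degree $O(1/\epsilon)$.

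For the running time, Step~1 costs $\polylog n$ by Lemmas~\ref{lem:greedyMM} and~\ref{lem:updatedsafteredgeupdate}; Step~2 iterates over an expected $O(1)$ vertices (the adjustment complexity of $M_0$), each processed in $O(1/\epsilon)$ time; Step~4 performs an expected $O(1/\epsilon)$ updates to the matching of Lemma~\ref{lem:lowdegree} at cost $O(1/\epsilon^2)$ each. All of the weight therefore falls on Step~3, which I bound by conditioning on the partition $j$ with $f \in S_j$. Since $\pi_0(f)$ is uniform in $[0,1]$, this event occurs with probability at most $\Delta^{-(j-1)\epsilon}$ for $j<1/\epsilon$ and at most $\Delta^{-1+\epsilon}$ for $j=1/\epsilon$. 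Conditional on it, Lemma~\ref{lem:greedyMM} ensures only $O(1)$ expected vertices are affected and each joins or leaves at most $O(1/\epsilon)$ graphs $G_i$ with $i \leq j$; each such join or leave incurs work proportional to $|L_v|$ or to $\deg_{G_i}(v)$, both $\widetilde{O}(\Delta^{j\epsilon})$ by Lemma~\ref{lem:sparsification} at threshold $\alpha=\Delta^{-j\epsilon}$ (and $\widetilde{O}(\Delta)$ when $j=1/\epsilon$, $\alpha=0$). Multiplying, partition $j$ contributes $\widetilde{O}(\Delta^{-(j-1)\epsilon}\cdot \Delta^{j\epsilon}/\epsilon) = \widetilde{O}(\Delta^\epsilon/\epsilon)$ to the expected cost; the boundary case $j=1/\epsilon$ contributes $\Delta^{-1+\epsilon}\cdot\widetilde{O}(\Delta)=\widetilde{O}(\Delta^\epsilon)$. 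Summing over the $1/\epsilon$ partitions yields $O(\Delta^\epsilon \polylog n)$ for constant $\epsilon$.

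The main obstacle is the simultaneous use of Lemma~\ref{lem:sparsification} at multiple thresholds: the degree bound on $L_v$ (insertion cost) uses it at threshold $\Delta^{-j\epsilon}$, and the degree bound on $\deg_{G_i}(v)$ (removal cost) uses it at threshold $\Delta^{-i\epsilon}$ for each $i \leq j$, all under the same $\pi_0$. This is exactly why Lemma~\ref{lem:sparsification} is stated with a universal quantifier over all $O(\log n)$-bit ranks $p$, so a single high-probability event suffices. A secondary check is that removing $v$ from $G_i$ in Step~3 is no more expensive than having inserted it: since every current neighbor $u$ of $v$ in $G_i$ satisfies $\pi_0(\elim_{G_0,\pi_0}(uv)) > \Delta^{-i\epsilon}$ by the same argument used for correctness, the sparsification bound again gives $\deg_{G_i}(v) = \widetilde{O}(\Delta^{i\epsilon}) \leq \widetilde{O}(\Delta^{j\epsilon})$, so the removal cost fits cleanly under the same bookkeeping.
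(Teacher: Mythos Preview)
Your proposal follows essentially the same route as the paper: Steps~1, 2, and 4 are dispatched by the cited black boxes, and Step~3 is bounded by the identical partition-level trade-off (probability at most $\Delta^{-(j-1)\epsilon}$ that $\pi_0(f)$ lands in interval $j$, against a per-event cost $\widetilde{O}(\Delta^{j\epsilon})$ coming from Lemma~\ref{lem:sparsification}). Your eliminator-rank argument that every edge of $G_i$ with $i\le j$ appears in $L_v$ is exactly the paper's ``all edges in $G_1,\ldots,G_j$ have eliminator rank at least $\alpha$'' claim, just spelled out.

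One technical slip to fix: after conditioning on $\pi_0(f)\in I_j$ you invoke the $O(1)$ \emph{expected} adjustment complexity from Lemma~\ref{lem:greedyMM}, but that expectation is taken over all of $\pi_0$, including $\pi_0(f)$ itself, and is not guaranteed to remain $O(1)$ once $\pi_0(f)$ is pinned to a short interval. The paper avoids this by using the w.h.p.\ $O(\log n)$ adjustment bound instead, which survives conditioning on any event of probability $\ge n^{-O(1)}$ and is absorbed into the $\polylog n$ factor. Swapping your conditional $O(1)$ for the unconditional w.h.p.\ $O(\log n)$ repairs the argument with no other changes.
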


Before that, let us show how we can actually turn this update-time to $O(\Delta^\epsilon + \polylog n)$ as claimed by Theorem~\ref{thm:main}. To do so, given $\epsilon$, we consider a smaller value for $\epsilon$, say $\epsilon/2$. Then the update-time would be $O(\Delta^{\epsilon/2}\polylog n)$. Now if $\Delta^{\epsilon/2} \gg \polylog n$, then we already have $\Delta^{\epsilon/2}\polylog n \ll \Delta^\epsilon$. Otherwise, $\Delta$ is polylogarithmic and the whole update-time is also polylogarithmic.

As another note, in Theorem~\ref{thm:main} we state that the update-time is worst-case but Lemma~\ref{lem:correctnessandrunningtime} bounds the expected update-time. To turn this into a worst-case bound, we use the reduction of Bernstein~\etal{} \cite{DBLP:conf/soda/BernsteinFH19}. For the reduction to work, the crucial property is that the update-time bound should hold in expectation but without any amortization, as is the case here.

\begin{proof}[Proof of Lemma~\ref{lem:correctnessandrunningtime}]
It is easy to verify correctness of Steps 1, 2, and 4 which are actually quite fast and take only $\polylog n$ time in total. We do provide the necessary details for these steps at the end of this proof. However, the main component of the update-algorithm is Step~3 which takes $O(\Delta^\epsilon \polylog n)$ time. We thus first focus on this step and analyze its running time and correctness.

As before, assume that edge $f$ is updated. If matching $M_0$ does not change as a result of this update, then the vertex sets of all graphs $G_1, \ldots, G_{1/\epsilon}$ will remain unchanged. However, if updating $f$ changes $M_0$, then $f$ should be in $M_0$ once in the graph. As in Algorithm~\ref{alg:updateadjacencylists}, we assume $f \in S_j$ and let $\alpha$ be the minimum possible rank in $S_j$. By Claim~\ref{cl:gkremainsunchanged}, graphs $G_{j+1}, \ldots, G_{1/\epsilon}$ remain unchanged. Also, one can confirm from Algorithm~\ref{alg:meta} that all edges in graphs $G_1, \ldots, G_j$ have eliminator rank of at least $\alpha$ in $M_0$. Thus, for any vertex $v$ added to a graph $G_i$, the set $L_v$ indeed includes all edges incident to $v$ that may belong to $G_i$. Moreover, by Lemma~\ref{lem:sparsification} this set $L_v$ has size at most $\min\{\Delta, O(\alpha^{-1}\log n)\}$ and that can be found in time $\widetilde{O}(|L_v|)$ since the neighbors of $v$ in $N_0(v)$ are indexed by their eliminator-rank. The overall update-time required for Step~3 is thus
\begin{flalign*}
	\text{(\# of edges updated in $M_0$)} \times \min\left\{\Delta, O\left(\frac{\log n}{\alpha}\right)\right\}.
\end{flalign*}
By Lemma~\ref{lem:greedyMM}, the number of edges that are updated in $M_0$ is w.h.p. bounded by $O(\log n)$. It remains to determine the expected value of the second factor in the running time above. Let us use $I_1, \ldots, I_{1/\epsilon}$ to denote the interval of ranks considered by Algorithm~\ref{alg:meta}. That is, $I_{1/\epsilon} = [0, \Delta^{-1+\epsilon}]$ and for any $i < 1/\epsilon$, $I_i = (\Delta^{-i\epsilon}, \Delta^{-(i-1)\epsilon}]$. For edge $f$ that is to be updated, probability that $\pi_0(f)$ is in the $i$th interval is upper bounded by $\Delta^{-(i-1)\epsilon}$. Moreover, given that $\pi_0(f)$ is in the $i$th interval, then $\min\{\Delta, \alpha^{-1}\log n\}$ would be at most $\Delta^{i\epsilon}\log n$. Thus:
\begin{flalign*}
\E\left[\min\left\{\Delta, O\left(\frac{\log n}{\alpha}\right)\right\}\right] &\leq \sum_{i = 1}^{1/\epsilon} \Pr[\pi_0(f) \in I_i] \times \E\left[\min\left\{\Delta, O\left(\frac{\log n}{\alpha}\right)\right\} \,\Bigg\vert\, \pi_0(f) \in I_i \right]\\
	& \leq \sum_{i = 1}^{1/\epsilon} \Delta^{-(i-1)\epsilon} \times \Delta^{i\epsilon} \log n \leq \Delta^\epsilon \log n.
\end{flalign*}
This means that the overall update-time required for Step~3 is $O(\Delta^\epsilon \polylog n)$.

Now, we focus on the other steps.

In Step~1, only matching $M_0$ as well as the data structures related to it are updated. These are correct and only take $\polylog n$ expected time by Lemmas~\ref{lem:greedyMM} and \ref{lem:updatedsafteredgeupdate}.

In Step~2, we detect the updates to the vertex sets of $G_1, \ldots, G_{1/\epsilon}$. This is done in Algorithm~\ref{alg:updatevertexsets} by iterating over all vertices whose match-status in $M_0$ is changed and checking the conditions of Algorithm~\ref{alg:meta}. By Observation~\ref{obs:edgetovertexupdate}, indeed any vertex who gets added/deleted from any graph $G_i$ should have an incident edge in $M_0$ whose match-status in $M_0$ has changed. Therefore, we do discover all the updates. Moreover, for each vertex encountered we only spend $O(1)$ time in Algorithm~\ref{alg:updatevertexsets} to detect which graph $G_i$ it belongs to and there are w.h.p. only $O(\log n)$ such vertices who have an edge with an updated match-status in $M_0$ by Lemma~\ref{lem:greedyMM}. Thus, the total running time for Step~2 is $O(\log n)$.

Finally, in Step~4, we update the final matching. The graph here is composed of $O(1/\epsilon)$ matchings and thus has maximum degree $O(1/\epsilon)$. Therefore, each update by Lemma~\ref{lem:lowdegree} takes $O(1/\epsilon) = O(1)$ time. Moreover, an edge update to graph $M_0$ w.h.p. affects at most $O(\log n)$ edges by Lemma~\ref{lem:greedyMM}, each of these updated edges may lead to vertex insertions/deletions in each graph $G_i$. But these propagated vertex insertions/deletions also affect at most $O(\log n)$ edges in each of the graphs. Thus, the total number of edges that leave/join graph $M_1 \cup \ldots \cup M_{1/\epsilon}$ is at most $O(\log^2 n)$, which is also the upper bound on the running time of Step~4.
\end{proof}

\section*{Acknowledgements}

Soheil Behnezhad thanks Mahsa Derakhshan, MohammadTaghi Hajiaghayi, Cliff Stein and Madhu Sudan for their collaboration in \cite{misfocs} and Sepehr Assadi for many helpful discussions and pointers to the literature.

\bibliographystyle{plain}
\bibliography{refs}

\appendix

\section{Greedy Matching Size under Vertex Sampling}\label{sec:greedyvertexsample}

In this section, we prove Lemma~\ref{lem:greedyvertexsample} by extending the ideas presented in \cite{streamingarXiv}.

\begin{proof}
	Consider the following equivalent process of constructing $\lfmm{G[W \cup U], \pi}$ that gradually reveals the subsample $W$ of $V$: We initialize matching $M \gets \emptyset$, initially mark each vertex as {\em alive}, and then iterate over the edges in $E$ in the order of $\pi$. Upon visiting an edge $vu$ with $v \in V$ and $u \in U$, if either $v$ or $u$ is {\em dead} (i.e., not alive) we discard $vu$. Otherwise, we call $vu$ a {\em potential-match} and then reveal whether $v$ belongs to $W$ by drawing an $p$-Bernoulli random variable. If $v \not\in W$, no edge connected to $v$ can be added to $M$, thus we mark $v$ as dead and discard $vu$. If $v \in W$, we add $vu$ to $M$ and then mark both $v$ and $u$ as dead as they cannot be matched anymore. One can confirm that at the end of this process, $M$ is precisely equivalent to matching $\lfmm{G[W \cup U], \pi}$.
	
	Let us fix an infinite tape of independent $p$-Bernoulli random variables $\vec{x} = (x_1, x_2, x_3, \ldots)$ and use it in the following way: Once we encounter the $i$'th potential-match edge $vu$ whose vertex $v$ is matched in matching $M$ (this is the matching in the statement of lemma, not to be confused with the greedy matching $M'$ we are constructing), we use the value of $x_i$ as the indicator of the event $v \in W$. Observe that if $x_i = 1$, then this edge $uv$ will be added to $M'$. Moreover, define $X := \sum_{i=1}^D x_i$ to be the number of 1's in the tape that we encounter by the end of process. (Here $D$ is the upper bound on the number of times that we reveal a random variable from the tape.) One can confirm that $X$ is precisely the number of vertices in the $V$-side of $M$ that are matched in $M'$: The precise quantity that lemma requires the lower bound for.
	
	Since each variable $x_i$ is 1 independently with probability $p$, we expect $X$ to be $pD$. However, note that the value of $D$ itself is a random variable depending on the randomizations revealed. Nonetheless, since $D$ is a stopping time for the process, we can use Wald's equation \cite{DBLP:books/daglib/0012859} to argue that the expected value of $X$ is indeed at least $p \E[D]$. Therefore, it suffices to show that $\E[D] \geq (|M|-2p|V|)$ to prove $\E[X] \geq p(|M|-2p|V|)$ as required by the lemma.
	
	To see why $\E[D]$ is this large, observe that for any vertex $v \in V$ that is matched in $M$ say via edge $vu$, we will encounter a potential-match edge unless $u$ is matched to another vertex. However, since each vertex in $V$ is sampled into $W$ with probability $p$, there are in expectation at most $p|V|$ vertices in $W$. Each such vertex can destroy at most two edges in $M$. For the rest of $|M|-2p|V|$ edges, we encounter at least a potential-match edge for which we reveal a random variable of the tape. Thus $\E[D] \geq |M|-p|V|$ and $\E[X] \geq p(|M|-2p|V|)$ as desired.
\end{proof}

\section{Missing Proofs}\label{sec:missingproofs}

\begin{proof}[Proof of Lemma~\ref{lem:updatedsafteredgeupdate}]
	Updating $k_i$ is easy. If for a vertex $v$, $k_i(v)$ has to be updated, then an edge incident to it must be in $L$. On the other hand, there are at most two edges connected to each vertex in $L$: At most one edge incident to it can join the matching and at most one was in it to leave. Therefore, by simply iterating over the edges in $L$, we can update $k_i(v)$ of any vertex necessary. This takes $O(|L|)$ time, which by Lemma~\ref{lem:greedyMM} is w.h.p. bounded by $O(\log n)$.
	
	Updating $N_i$ is more tricky. If an edge joins the matching, it can now become the eliminator of many other edges and if the eliminator of an edge $uv$ changes, we have to re-index $u$ and $v$ in each other's adjacency list $N_i(v)$. The crucial observation is that since the matching is constructed greedily, the matching on edges with rank in $[0, \pi_i(e))$ remains unchanged after inserting/deleting $e$. As a result, if the eliminator of an edge had rank in $[0, \pi_i(e))$, it remains to be its eliminator. Therefore, all the changes occur in the subgraph including edges with eliminator rank before the update was larger than $\pi_i(e)$. By Lemma~\ref{lem:sparsification} this graph has maximum degree $\min\{\Delta, O(\frac{\log n}{\pi_i(e)})\}$ w.h.p. Moreover, we can iterate over all neighbors $g$ of any edge $f \in L$ with $\elim_{G_i, \pi_i}(g) > \pi_i(e)$ in time $\min\{\Delta, \frac{1}{\pi_i(e)}\} \polylog n$. Re-indexing each also takes at most time $O(\log \Delta)$. Thus, the overall time required is
	\begin{flalign*}
	|L| \times \E\left[ \min\left\{\Delta, \frac{1}{\pi_i(e)}\right\} \polylog n \right] &\leq \E_{\pi_i(e) \sim [0, 1]}\left[ \min\left\{\Delta, \frac{1}{\pi_i(e)}\right\} \right] \polylog n  && \text{W.h.p. $|L| = O(\log n)$.} \\
	&= O(\log \Delta) \times \polylog n = \polylog n,
	\end{flalign*}
	completing the proof.
\end{proof}

\begin{proof}[Proof of Claim~\ref{cl:gkremainsunchanged}]
	Fix a graph $G_k$ with $k > j$. We first prove no edge is removed from $G_k$ after updating $f$. Take an edge $uv$ that is in $G_k$. By Algorithm~\ref{alg:meta}, one endpoint of this edge, say $v$ w.l.o.g., should be matched in $M_0$ via an edge that belongs to $S_k$. The other endpoint $u$, is either unmatched in $M_0$ or matched via an edge $ux$ where $ux \in S_\ell$ for some $\ell < k$. After the update, $v$ remains to be matched to the same vertex $u$ in $M_0$ for the following reason. Since $f \in S_j$, $uv \in S_k$, and $k > j$, then $\pi_0(f) > \pi_0(uv)$. As a result, since matching $M_0$ is constructed greedily by processing the edges in the increasing order of ranks, all the edges processed before $f$ that are in the matching will remain in the matching no matter if $f$ is in the graph or not, meaning that $uv$ will remain in $M_0$. Moreover, even though edge $ux$ may leave matching $M_0$, the edge to which $u$ will be matched  after the update (if any) will have rank at least $\pi_0(f)$. Thus $u$ will remain in set $U_k$ and as a result, the edge $vu$ will remain in $G_k$.
	
	A similar argument shows that any edge $uv$ in $G_k$ after the update, should have been in $G_k$ before the update too. More precisely, if $v$ is the part of edge $uv$ that is matched in $S_k$ after the update, then its matching edge in $M_0$ should have been in $M_0$ before the update too for precisely the same reason mentioned above. Moreover, no matter the update, if the other endpoint $u$ is in set $U_k$ after the update, it should have been in $U_k$ before the update too. Implying that $uv$ should have also been in $G_k$ before the update.

	Combination of the arguments of the two paragraphs above implies that graph $G_k$ remains exactly the same after and before the update.
\end{proof}


\end{document}